\definecolor{darkgreen}{rgb}{0.0, 0.5, 0.13}
\DeclareMathOperator{\Tr}{Tr}
\DeclareMathOperator*{\argmax}{arg\,max}
\newtheorem{theorem}{Theorem}
\newtheorem{lemma}{Lemma}
\newtheorem{corollary}{Corollary}
\crefname{lemma}{Lemma}{Lemmas}
\newtheorem{proof-sketch}{Proof Sketch}
\begin{document}

\title{Quantifying the intrinsic randomness of quantum measurements}
\author{Gabriel Senno}
\affiliation{ICFO-Institut de Ciencies Fotoniques, The Barcelona
Institute of Science and Technology, 08860 Castelldefels,
Barcelona, Spain}
\affiliation{Quside Technologies S.L., C/Esteve Terradas 1, 08860 Castelldefels,
Barcelona, Spain}
\author{Thomas Strohm}
\affiliation{Corporate Research, Robert Bosch GmbH, 71272 Renningen, Germany}
\author{Antonio Ac\'in}
\affiliation{ICFO-Institut de Ciencies Fotoniques, The Barcelona
Institute of Science and Technology, 08860 Castelldefels,
Barcelona, Spain}
\affiliation{ICREA-Institucio Catalana de Recerca i Estudis Avan\c cats, Lluis Companys 23, 08010 Barcelona, Spain}

\begin{abstract}
Intrinsic quantum randomness is produced when a projective measurement on a given basis is implemented on a pure state that is not an element of the basis. The prepared state and implemented measurement are perfectly known, yet the measured result cannot be deterministically predicted. In realistic situations, however, measurements and state preparation are always noisy, which introduces a component of stochasticity in the outputs that is not a consequence of the intrinsic randomness of quantum theory. Operationally, this stochasticity is modelled through classical or quantum correlations with an eavesdropper, Eve, whose goal is to make the best guess about the outcomes produced in the experiment. In this work, we study Eve's maximum guessing probability when she is allowed to have correlations with, both, the state and the measurement. We show that, unlike the case of projective measurements (as it was already known) or pure states (as we prove), in the setting of generalized measurements and mixed states, Eve's guessing probability differs depending on whether she can prepare classically or quantumly correlated strategies. %

\end{abstract}

\maketitle

\section{Introduction}
Quantum theory contains a form of randomness that is not the result of ignorance or any stochastic behaviour. For instance, according to the theory, the result of implementing a spin measurement along the $x$ direction on a spin state pointing in the $+z$ direction is fully unpredictable. This is despite the fact that the description of the experiment within the theory is complete, in the sense that the prepared state and implemented measurement are perfectly known, without any stochastic component. 
This form of randomness is \emph{intrinsic} to quantum theory and impossible in classical physics~\cite{randreview1,randreview2}. Beyond fundamental considerations, it is also the key element behind any quantum random-number generator~(QRNG).

In real life implementations, however, measurements are never projective and states are never pure. Noise and imperfections introduce an unavoidable element of \emph{stochasticity} that produces an apparent randomness that is not intrinsic to quantum theory. Therefore, it is a fundamental problem to design the tools to estimate the correct amount of intrinsic quantum randomness produced in a quantum experiment. This question is of relevance from a quantum foundations viewpoint, but also for the proper design of QRNGs. In fact, the natural and operational way to model the stochasticity in the components of the setup is through classical or quantum correlations with an external observer, Eve, who can also be interpreted as an eavesdropper and whose goal is to make the best \emph{guess} about the outcomes produced in the experiment. The correlations with Eve are often named (classical or quantum) side information.


So far, the scenario that has mostly been considered in the literature is the one in which all the stochasticity comes from the prepared quantum state. That is, the state of the system is no longer pure, but the measurement is still assumed to be projective. The main goal of this work is to study \emph{Eve's guessing probability} about the outcomes of a quantum measurement when she is allowed to have correlations with, both, the state and the measurement. We work in a completely \emph{device-dependent} setting, where the state of the system and the measurement have been fully characterized, and consider two alternative formulations of this problem: a classical and a quantum one. In the classical picture, Eve can sample a random variable $\Lambda$ given the value of which there is no stochasticity in her description of the experiment. For the quantum case, we consider the model of quantum side information involving a \emph{generalized} Naimark dilation of the user's measurement, introduced by Frauchiger et al. \cite{frauchiger2013true}. In this model, Eve is allowed to have a quantum system $E$ correlated with the system being measured and with the ancillary system in the dilation.

It is a well-known result that when the measurement is assumed to be projective (or, more generally, extremal), Eve's guessing probability in the classical and quantum pictures coincide. Our first result (Theorem \ref{thm:equality-pguesses-povms}) is that this is also the case when the measurement is arbitrary but the state is pure.
Then, we move to the more relevant case in which, both, the prepared state and the implemented measurement are subject to (in general, correlated) noise, and provide a framework to estimate the produced quantum randomness. For this general scenario, we first show that Eve's guessing probability in the quantum picture is always greater than or equal to the classical one (Theorem \ref{thm:conditions-for-eq}). Our main result, however, is that there exist states and measurements for which the inequality is strict (Theorem \ref{thm:main}). In Table \ref{tab:tabla-quantum-vs-classical} we summarize the relative strengths of classical and quantum guessing probabilities for the different combinations of types of states and measurements. We finally illustrate the applicability of our approach by considering an experiment in which noisy single-photon detectors are applied to the two-mode state resulting from a single photon impinging into a balanced beam-splitter. The bounds on the guessing probability we derive demonstrate that Eve could make a more informed guess on the obtained results than when using the measurement model in~\cite{frauchiger2013true}.

\begin{table*}[t]
\centering
 \begin{tabular}{|c | c | c|} 
 \hline
  & Projective measurement & General POVM \\ [0.5ex] 
 \hline\hline
 Pure state & $p^\text{Q}_{{\rm guess}}(X|E) = p^\text{C}_{{\rm guess}}(X|\Lambda)$ & $p^\text{Q}_{{\rm guess}}(X|E) = p^\text{C}_{{\rm guess}}(X|\Lambda)$ \textbf{[This work]} \\
 General state & $p^\text{Q}_{{\rm guess}}(X|E) = p^\text{C}_{{\rm guess}}(X|\Lambda)$ & $p^\text{Q}_{{\rm guess}}(X|E) \geq p^\text{C}_{{\rm guess}}(X|\Lambda)$ \textbf{[This work]} \\
 & & $\exists \rho, \{M_S^x\}_x$ $p^\text{Q}_{{\rm guess}}(X|E,\rho,\{M_S^{x}\}_x)>p^\text{C}_{{\rm guess}}(X|\Lambda,\rho,\{M_S^{x}\}_x)$ \textbf{[This work]}\\
 \hline
 \end{tabular}
 \caption{Relationship between the classical and quantum guessing probabilities. Prior to this work, they were only known to be equivalent for projective measurements. In this work we proved (i) the equivalence for pure states and an arbitrary POVMs and (ii) that the quantum guessing probability can be strictly larger than the classical in the most general scenario.}
  \label{tab:tabla-quantum-vs-classical} 
\end{table*}


\section{Noisy preparation}
Before presenting our contributions, it is worth reviewing the known results for the setting of a projective measurement (PM) on a system in a mixed state. Let us start with a toy example. Consider that a measurement in the computational basis $\{\ket{0},\ket{1}\}$ is conducted on a qubit $S$ in the $\ket{+}=(\ket 0+\ket 1)/\sqrt 2$ state and that the outcome $+1$ is obtained. Suppose that this outcome was to be communicated to an interested user, Alice, but, before that, an eavesdropper, Eve, learns this outcome and then destroys any record of it.
If Alice, knowing that the measurement was performed but ignoring the outcome, wants to describe the state of system $S$, she has to associate to it the ensemble of states $\{p_X(x),\ket{x}\}$ with $p_X(0)=p_X(1)=1/2$ and represent it with the maximally mixed state $\frac{\mathbb{I}}{2}=\frac{1}{2}[\ket{0}\bra{0}+\ket{1}\bra{1}]$. Therefore, to the question of what would be the result of a second measurement in the $X$ basis, she can do no better than a uniformly random guess. Eve, on the other hand, having the additional \emph{classical side information} of the first measurement's outcome, has a better (in fact, complete) description of the state of $S$ and, therefore, can deterministically predict that the second outcome will be $+1$. 

What this simple example shows is that when one represents the state of a system $S$ with a mixed state $\rho_S$ compatible with an ensemble $\{p_\Lambda(\lambda),\ket{\lambda}\}$, one can never rule out the possibility of Eve being classically correlated with the state of the system via the random variable $\Lambda$. She, after learning (or, sampling) a value $\lambda$ for $\Lambda$, can make a better prediction for the outcome of a measurement on $S$ than the honest user Alice. Moreover, since for a given mixed state $\rho_S$ there are infinitely many ensembles compatible with it, to assess the unpredictability of the outcomes of a measurement one has to consider them all, as some might provide better predicting power to Eve than others. 
This rationale leads to defining Eve's \emph{classical guessing probability} as
\begin{align}\label{eq:classical-pguess-mixed-states}
& p^\text{C}_{{\rm guess}}(X|\Lambda,\rho_S,\{\Pi_S^x\}_x):= &&\nonumber\\
&\qquad\max_{p(\lambda),\ket{\varphi_\lambda}_S}\sum_{\lambda}p(\lambda) \max_x \bra{\varphi_\lambda}\Pi_S^{x}\ket{\varphi_\lambda}_S &&\nonumber\\
&\qquad~\textrm{subject to } \sum_\lambda p(\lambda)\ket{\varphi_\lambda}\bra{\varphi_\lambda}_S = \rho_S.&
\end{align}
Notice that when $\rho_S=\ket{\psi}\bra{\psi}_S$ is pure, because of its extremality in the set of states, $p^\text{C}_{{\rm guess}}(X|\Lambda,\rho_S,\{\Pi_S^x\}_x)=\max_x \bra{\psi}\Pi_S^{x}\ket{\psi}_S$ and, hence, all the observed randomness is of a quantum origin.

One could have considered that Eve, rather than having access to some classical random variable $\Lambda$, has access to another quantum system $E$ (or, to the environment) such that the global state $\rho_{SE}=\ket{\psi}\bra{\psi}_{SE}$ is (without loss of generality) pure. By the measurement of $\{\Pi_S^x\}$ on $S$, the classical outcome $x$ is produced with probability $p(x)=\Tr[\Pi_S^x\rho_S]$ and the state of the environment is steered to $\rho_E^x=\Tr_S[(\Pi_S^x\otimes \mathbb{I}_E)\ket{\psi}\bra{\psi}_{SE}]/p(x)$. 
Given that the states $\rho_{E}^x$ are, in general, not diagonal in the same basis, we say that Eve holds \emph{quantum side information} about the random variable $X$. Eve then chooses a measurement $\{M_E^x\}_x$ trying to maximise the probability that its outcome (her prediction) is $x$ when the steered state was $\rho_{E}^x$. In other words, she performs a measurement maximising $\sum_{x} p(x)\Tr[M_E^x\rho_E^x]$, the average probability to distinguish the states $\rho_E^x$ occurring with probability $p(x)$. Eve's \emph{quantum guessing probability} \cite{konig2009operational} is then given by
\begin{align}\label{eq:quantum-pguess-mixed-states}
  p^\text{Q}_{\rm guess}(X|E,&\rho_S,\{\Pi_S^x\}_x) \nonumber\\
  &:= \max_{\{M_E^{x}\}_{x}} \sum_{x} p(x)\Tr[M_E^x\rho_E^x]\nonumber\\
  &=\max_{\{M_E^{x}\}_{x}} \sum_{x}\bra{\psi}\Pi_{S}^{x}\otimes M_E^{x}\ket{\psi}_{SE}
\end{align}
where $\ket{\psi}_{SE}$ is any fixed purification of $\rho_S$ (they are all equivalent up to a unitary in $E$, which can be absorbed in the optimisation over $\{M_E^x\}_x$). 

Theorem \ref{thm:equality-pguesses-mixed-states} states the well-known result that these two different ways of quantifying Eve's predicting power are equivalent. In other words, entanglement does not provide Eve with an advantage in the task of guessing the outcomes of a PM on a mixed state.

\begin{theorem}[Folkore]\label{thm:equality-pguesses-mixed-states}
  For every state $\rho_S$ and every PM $\{\Pi_S^x\}_x$,
  \begin{equation*}
    p^\text{C}_{{\rm guess}}(X|\Lambda,\rho_S,\{\Pi_S^x\}_x)
    =p^\text{Q}_{\rm guess}(X|E,\rho_S,\{\Pi_S^x\}_x)~.
\end{equation*}
\end{theorem}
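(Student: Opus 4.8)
The plan is to prove the two inequalities separately, working throughout in the model underlying the displayed definitions: $\ket{\psi}_{SE}$ is a purification of $\rho_S$ and Eve holds the purifying system $E$.

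For the direction $p^\text{C}_{\rm guess}\le p^\text{Q}_{\rm guess}$, which is the easy one, I would take an optimal classical ensemble $\{p(\lambda),\ket{\varphi_\lambda}_S\}$ attaining $p^\text{C}_{\rm guess}$ and use it to build the explicit purification $\ket{\psi}_{SE}=\sum_\lambda\sqrt{p(\lambda)}\,\ket{\varphi_\lambda}_S\ket{\lambda}_E$ with $\{\ket{\lambda}\}_E$ orthonormal; tracing out $E$ returns $\rho_S$, so this is a legitimate purification (one may enlarge $E$ to accommodate all ensemble elements, which does not change $p^\text{Q}_{\rm guess}$ since any two purifications differ by an isometry on $E$ absorbed into the optimisation over $\{M_E^x\}_x$). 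Eve then measures $E$ in the basis $\{\ket{\lambda}\}$ and announces $g(\lambda)=\argmax_x\bra{\varphi_\lambda}\Pi_S^x\ket{\varphi_\lambda}$, i.e.\ she uses $M_E^x=\sum_{\lambda:\,g(\lambda)=x}\ket{\lambda}\bra{\lambda}$. A short computation using $\braket{\lambda'|\lambda}=\delta_{\lambda\lambda'}$ kills the cross terms and gives $\sum_x\bra{\psi}\Pi_S^x\otimes M_E^x\ket{\psi}=\sum_\lambda p(\lambda)\max_x\bra{\varphi_\lambda}\Pi_S^x\ket{\varphi_\lambda}=p^\text{C}_{\rm guess}$, so the maximum over Eve's measurements is at least $p^\text{C}_{\rm guess}$.

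The converse $p^\text{Q}_{\rm guess}\le p^\text{C}_{\rm guess}$ is the conceptual core. I would fix a purification $\ket{\psi}_{SE}$ and an optimal POVM $\{M_E^x\}$ for Eve, and read her measurement as a \emph{steering} of $S$: setting $q(x)=\bra{\psi}\mathbb{I}_S\otimes M_E^x\ket{\psi}$ and $q(x)\sigma_S^x=\Tr_E[(\mathbb{I}_S\otimes M_E^x)\ket{\psi}\bra{\psi}]$, the completeness relation $\sum_x M_E^x=\mathbb{I}_E$ forces $\sum_x q(x)\sigma_S^x=\rho_S$, so $\{q(x),\sigma_S^x\}$ is an ensemble decomposition of $\rho_S$, and tracing out $E$ shows $p^\text{Q}_{\rm guess}=\sum_x q(x)\Tr[\Pi_S^x\sigma_S^x]$. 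Since the $\sigma_S^x$ may be mixed, I would refine each into pure states, $\sigma_S^x=\sum_j r_{x,j}\ket{\phi_{x,j}}\bra{\phi_{x,j}}$, obtaining the feasible pure-state ensemble $\{q(x)r_{x,j},\ket{\phi_{x,j}}\}$ of $\rho_S$. Plugging this ensemble into the definition of $p^\text{C}_{\rm guess}$ and lower-bounding the inner maximum by the specific choice $x'=x$ yields $p^\text{C}_{\rm guess}\ge\sum_{x,j}q(x)r_{x,j}\bra{\phi_{x,j}}\Pi_S^x\ket{\phi_{x,j}}=\sum_x q(x)\Tr[\Pi_S^x\sigma_S^x]=p^\text{Q}_{\rm guess}$.

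The step I expect to require the most care is this converse, in two respects: correctly identifying Eve's measurement with an ensemble decomposition of $\rho_S$ (the steering identity), and the pure-state refinement that lets the per-element $\max_x$ in the classical definition absorb Eve's quantum success probability. I would also flag a conceptual subtlety worth stating explicitly: neither inequality above actually invokes projectivity, so within this purification model the equality holds for any $\{\Pi_S^x\}_x$ on $S$. The role of projectivity is that for a PM the physically correct quantum-side-information model \emph{is} this purification model, with no Naimark ancilla for Eve to correlate with; this is precisely the feature that later fails for general POVMs, where Eve additionally accesses the dilation ancilla and the equality can become strict.
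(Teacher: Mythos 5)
Your proof is correct. Note that the paper never proves Theorem~1 directly: it is labelled folklore and obtained as a corollary of the general Theorem~3, whose two halves are proved in Appendix~A in the full scenario with a Naimark ancilla $A$. Your argument is, in effect, the specialization of that machinery to a trivial ancilla, written out as a short self-contained proof. Your first direction --- purifying the optimal ensemble as $\sum_\lambda \sqrt{p(\lambda)}\,\ket{\varphi_\lambda}_S\ket{\lambda}_E$ and letting Eve measure $M_E^x=\sum_{\lambda:\,g(\lambda)=x}\ket{\lambda}\bra{\lambda}$ with $g$ the argmax rule --- is exactly the construction in the paper's Lemma~1 with its extra apparatus (the register $A_1A_2$, the dilation operator $\tilde U$, the sets $A_x$ over pairs $(i,j)$) stripped of the measurement-decomposition part, which is absent when the measurement is fixed; the only cosmetic gap is that your $M_E^x$ need the orthogonal complement of ${\rm span}\{\ket{\lambda}\}$ appended to one outcome to form a complete POVM, which does not change the value. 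Your converse is the paper's Lemma~2 with the separability hypothesis automatically satisfied: since $A$ is trivial, the postmeasurement states live on $S$ alone, and your steering identity $\sum_x q(x)\sigma_S^x=\rho_S$, the pure-state refinement of each $\sigma_S^x$, and the lower bound obtained by choosing $x'=x$ inside the max reproduce precisely the chain of (in)equalities in that lemma. What your route buys is an elementary, standalone proof of the folklore statement (essentially the Hughston--Jozsa--Wootters steering argument); what the paper's route buys is a single argument covering all entries of Table~I at once, isolating separability of the postmeasurement $SA$ states as the mechanism behind equality. Your closing observation --- that projectivity is never invoked, and only serves to justify the purification-only model of side information --- is correct and consistent with the paper's remark that the theorem extends to extremal measurements and with its later result that access to the dilation ancilla is exactly what breaks the equality for general POVMs.
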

This result in fact holds for any extremal measurement, not necessarily projective.

\section{Noisy measurement}
Before studying the most general scenario, let us first consider the case in which a general measurement, represented by a Positive-Operator Valued Measure (POVM) $\{M_S^x\}_x$, is measured on a system $S$ in a pure state $\ket{\phi}_S$. Given that the set of POVMs is, just like the set of quantum states, convex \cite{d2005classical}, we can proceed via analogy with the case of a mixed state and assume that Eve can now sample a random variable $\Lambda$ such that $M_S^x=\sum_\lambda p(\lambda) M_S^{x,\lambda}$ with $\{M_S^{x,\lambda}\}_x$ POVMs for all $\lambda$. With her knowledge of $\lambda$, her best prediction for the outcome of the measurement on $S$ is $\argmax_x \bra{\phi}M_{S}^{x,\lambda}\ket{\phi}$ and this is correct with probability $\sum_\lambda p(\lambda)\max_x \bra{\phi}M_{S}^{x,\lambda}\ket{\phi}$. Finally, by letting Eve optimise over all possible convex combinations, her \emph{classical guessing probability} is
\begin{align}\label{eq:classical-pguess-povm}
& p^\text{C}_{{\rm guess}}(X|\Lambda,\ket{\phi}_S,\{M_S^x\}_x):= &&\nonumber\\
&\qquad \max_{p(\lambda),\{M_S^{x,\lambda}\}_x}\sum_\lambda p(\lambda)\max_x \bra{\phi}M_{S}^{x,\lambda}\ket{\phi}_S &&\nonumber\\
&\qquad~\textrm{subject to } \sum_\lambda p(\lambda) M_S^{x,\lambda} = M_S^x\text{ for all }x.&
\end{align}
Analogously to the case of a pure state, when $\{M_S^x\}_x$ is extremal (but not necessarily projective \cite{d2005classical}) we have completely intrinsic quantum randomness, that is $p^\text{C}_{{\rm guess}}(X|\Lambda,\ket{\phi}_S,\{M_S^x\}_x)=\max_x \bra{\phi}M_S^{x}\ket{\phi}_S$.

\medskip

A notion of a quantum guessing probability for the case of general POVMs was given in \cite{frauchiger2013true}. One assumes that what is seen as a POVM $M_S$ on system $S$ is, in fact, a PM $\{\Pi_{SA}^x\}$ on $S$ and an ancillary system $A$. In fact, $\{\Pi_{SA}^x\}_x$ is a Naimark extension of $M_S$ and the correlations with Eve are modelled via a mixed state $\sigma_A$ on $A$, of which she holds a purification $\ket{\psi_{AE}}$. See Fig. \ref{fig:quantum-side-info} for a schematic description of this model of quantum side information.  Then, as in the case studied in the previous section, Eve optimises over measurements $\{M_E^x\}_x$ trying to maximise on average the correlation $\bra{\phi,\psi}\Pi_{SA}^{x}\otimes M_E^{x}\ket{\phi,\psi}_{SAE}$. Eve's quantum guessing probability is thus given by
\begin{align}\label{eq:quantum-pguess-povm}
& p^\text{Q}_{{\rm guess}}(X|E,\ket{\phi}_S,\{M_S^x\}_x):= &&\nonumber\\
&\qquad \max_{\{\Pi_{SA}^{x}\}_{x},\ket{\psi}_{AE},\{M_E^{x}\}_{x}}\sum_{x}\bra{\phi,\psi}\Pi_{SA}^{x}\otimes M_E^{x}\ket{\phi,\psi}_{SAE} &&\nonumber\\
&\qquad~\textrm{subject to } &\nonumber\\
&\qquad\qquad \mathrm{Tr}_A[\Pi_{SA}^{x}(\mathbb{I}_S\otimes \Tr_{E}[\ket{\psi}\bra{\psi}_{AE}])]=M^x_S \text{ for all } x.&
\end{align}

\begin{figure}[h!]
     \includegraphics[width=\columnwidth]{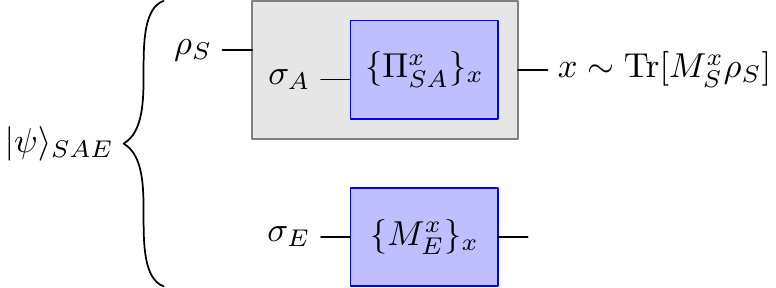}
\caption{Schematic description of our model for quantum side information. Eve chooses a projective implementation $(\{\Pi_{SA}^x\}_x,\sigma_A)$ of the user's POVM $\{M_S^x\}_x$ and she is allowed to be entangled with, both, the system $S$ and the ancilla $A$. With this, she optimises over measurements on her subsystem maximising the correlation with the user's measurement outcomes (see Eq. \eqref{eq:pguess-entangled-ancilla-general}).} \label{fig:quantum-side-info}
\end{figure}

There is an important difference between Eq. \eqref{eq:quantum-pguess-povm} and the analogous in the framework introduced in \cite{frauchiger2013true}. In the latter, the particular projective implementation $(\{\Pi_{SA}^{x}\}_x,\sigma_A)$ for a given POVM $\{M_S^x\}_x$ has to be specified by the user. In this work, we let it be chosen, in fact optimised, by Eve. This is more natural when quantifying randomness, especially in adversarial scenarios. Our first result, Theorem \ref{thm:equality-pguesses-povms}, is the analogous of Theorem \ref{thm:equality-pguesses-mixed-states}, now for noisy measurements and pure states, instead of noisy states and PMs.
\begin{theorem}\label{thm:equality-pguesses-povms}
For every pure state $\ket{\phi}_S$ and every POVM $\{M_S^x\}_x$, 
\begin{equation*}
p^\text{C}_{{\rm guess}}(X|\Lambda,\ket{\phi}_S,\{M_S^x\}_x)=p^\text{Q}_{\rm guess}(X|E,\ket{\phi}_S,\{M_S^x\}_x).
\end{equation*} 
\end{theorem}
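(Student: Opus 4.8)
The plan is to prove the two inequalities separately, since each guessing probability is a maximum over its own set of strategies. The direction $p^\text{Q}_{\rm guess}\geq p^\text{C}_{\rm guess}$ is essentially bookkeeping: I would take an optimal classical decomposition $M_S^x=\sum_\lambda p(\lambda)M_S^{x,\lambda}$ and build an explicit quantum strategy with the same value by letting Eve keep a classical copy of $\lambda$. Concretely, take the ancilla $A$ to comprise a Naimark register $A'$ together with a register $R$ storing $\lambda$, set $\ket{\psi}_{AE}=\sum_\lambda\sqrt{p(\lambda)}\ket{0}_{A'}\ket{\lambda}_R\ket{\lambda}_E$, and let $\Pi_{SA}^x=\sum_\lambda \Pi_{SA'}^{x,\lambda}\otimes\ket{\lambda}\bra{\lambda}_R$, where $\{\Pi_{SA'}^{x,\lambda}\}_x$ is a Naimark dilation of $\{M_S^{x,\lambda}\}_x$ on the fixed state $\ket{0}_{A'}$. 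One checks that the marginal constraint collapses to $\sum_\lambda p(\lambda)M_S^{x,\lambda}=M_S^x$, and that Eve reading $\lambda$ from $E$ and outputting $\argmax_x\bra{\phi}M_S^{x,\lambda}\ket{\phi}$ reproduces the classical value exactly.

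The substantive direction is $p^\text{Q}_{\rm guess}\leq p^\text{C}_{\rm guess}$, where I must convert an arbitrary optimal quantum strategy into an admissible classical decomposition doing at least as well. The key idea is to let \emph{Eve's measurement define the hidden variable}. Given the optimal $(\{\Pi_{SA}^x\},\ket{\psi}_{AE},\{M_E^x\})$, I would trace out $E$ against Eve's measurement to obtain the sub-normalised operators $\tilde\sigma_A^{x'}:=\Tr_E[(\mathbb{I}_A\otimes M_E^{x'})\ket{\psi}\bra{\psi}_{AE}]$, which sum to $\sigma_A$. Setting $p_{x'}:=\Tr[\tilde\sigma_A^{x'}]$ and $M_S^{x,\lambda=x'}:=p_{x'}^{-1}\Tr_A[\Pi_{SA}^x(\mathbb{I}_S\otimes\tilde\sigma_A^{x'})]$, I would verify that for each fixed $x'$ the family $\{M_S^{x,x'}\}_x$ is a genuine POVM: positivity is immediate, and $\sum_x M_S^{x,x'}=\mathbb{I}_S$ follows from $\sum_x\Pi_{SA}^x=\mathbb{I}_{SA}$ together with $\Tr[\tilde\sigma_A^{x'}]=p_{x'}$. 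The marginal constraint on the quantum strategy then yields $\sum_{x'}p_{x'}M_S^{x,x'}=M_S^x$, so $\{p_{x'},\{M_S^{x,x'}\}_x\}$ is a legitimate classical strategy for $\{M_S^x\}_x$.

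The comparison of values closes the argument. Tracing out $E$ term by term rewrites the quantum objective as $\sum_{x}\bra{\phi}\Tr_A[\Pi_{SA}^x(\mathbb{I}_S\otimes\tilde\sigma_A^{x})]\ket{\phi}=\sum_{x'}p_{x'}\bra{\phi}M_S^{x',x'}\ket{\phi}$, which is precisely the value of the \emph{diagonal} guess $x=x'$ in the classical decomposition just built. Since the classical player is free to choose the best outcome for each $\lambda=x'$, its value $\sum_{x'}p_{x'}\max_x\bra{\phi}M_S^{x,x'}\ket{\phi}$ dominates the diagonal one, giving $p^\text{C}_{\rm guess}\geq p^\text{Q}_{\rm guess}$.

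I expect the main obstacle to be conceptual rather than computational: recognising that Eve's quantum measurement outcome can be relabelled as the classical variable $\lambda$, and that the steered operators $\tilde\sigma_A^{x'}$ assemble, through the fixed projective measurement $\Pi_{SA}^x$, into conditional POVMs whose convex combination recovers $M_S^x$. Once this identification is made, the proof reduces to the normalisation checks above and the elementary bound $\max_x(\cdot)\geq(\cdot)|_{x=x'}$. The only technical care needed is to discard the degenerate outcomes with $p_{x'}=0$ from the decomposition, which is harmless.
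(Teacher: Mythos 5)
Your proof is correct and takes essentially the same route as the paper: your first direction is the paper's Lemma~1 construction (a controlled Naimark dilation in which Eve keeps a classical copy of $\lambda$), and your second direction is the paper's Lemma~2 specialized to a pure state, where the product structure $\ket{\phi}_S\otimes\ket{\psi}_{AE}$ makes the post-measurement states on $SA$ automatically separable, so that Eve's measurement outcome can serve directly as $\lambda$ and the steered operators $\tilde\sigma_A^{x'}$ define the conditional POVMs $M_S^{x,x'}=p_{x'}^{-1}\Tr_A[\Pi_{SA}^x(\mathbb{I}_S\otimes\tilde\sigma_A^{x'})]$, exactly as in the paper's change of variables. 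The only cosmetic difference is that the paper derives the theorem as a corollary of the more general Theorem~\ref{thm:conditions-for-eq}, whereas you give the specialization self-contained.
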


Theorem \ref{thm:equality-pguesses-povms}, in fact, follows as a corollary of a theorem for the most general scenario, which we state in the following section.

\section{Noisy preparation and measurement}
We arrive, now, at the most general setting. Let us consider that a POVM $\{M_S^x\}_x$ is measured on a system $S$ in a state $\rho_S$. When considering classical side information, we now let Eve choose convex decompositions of, both, the state and the measurement. Her classical guessing probability is thus given by
\begin{align}\label{eq:pguess-extremals-general}
& p^\text{C}_{{\rm guess}}(X|\Lambda,\rho_S,\{M_S^x\}_x):= &&\nonumber\\
&\qquad \max_{p(i,j),\ket{\varphi_i}_S,\{M_S^{x,j}\}_{x}}\sum_{i,j}p(i,j) \max_x \bra{\varphi_i}M_S^{x,j}\ket{\varphi_i}_S &&\nonumber\\
&\qquad~\textrm{subject to } &\nonumber\\
&\qquad\qquad\qquad \sum_{i,j} p(i,j)\ket{\varphi_i}\bra{\varphi_i} = \rho_S&\nonumber\\
&\qquad\qquad\qquad \sum_{i,j} p(i,j)M_S^{x,j} = M_S^{x} \text{ for all } x&\nonumber\\
&\qquad\qquad\qquad \sum_{i,j}p(i,j)\braket{\varphi_i|M_S^{x,j}|\varphi_i}=\Tr[M_S^x\rho_S]
\end{align}
The last condition in this optimization problem states that Eve's strategy, although potentially correlating the choices of pure state and extremal measurement, cannot be arbitrary, as it must reproduce the observed statistics on $S$. In the restricted cases of the previous sections, we do not need to explicitly impose this because it follows immediately from the convex decomposition requirement.  

In the case of quantum side information, we let Eve hold a purification of the joint state $\rho_{SA}$ of the system $S$ plus the ancillary system used for her choice of a projective implementation $(\{\Pi_{SA}^x\},\sigma_A)$ of $\{M_S^x\}_x$. Notice that we do not assume that $\rho_{SA}=\rho_S\otimes \sigma_A$. Eve's quantum guessing probability is thus given by
\begin{align}\label{eq:pguess-entangled-ancilla-general}
& p^\text{Q}_{{\rm guess}}(X|E,\rho_S,\{M_S^x\}_x):= &\nonumber\\
&\qquad\max_{\{\Pi_{SA}^{x}\}_{x},\ket{\psi}_{SAE},\{M_E^{x}\}_{x}}\sum_{x}\bra{\psi}\Pi_{SA}^{x}\otimes M_E^{x}\ket{\psi}_{SAE} &&\nonumber\\
&\qquad~\textrm{subject to } &\nonumber\\
&\qquad\quad \Tr_{AE}[\ket{\psi}\bra{\psi}_{SAE}]=\rho_S &\nonumber\\
&\qquad\quad \mathrm{Tr}_A[\Pi_{SA}^{x}(\mathbb{I}_S\otimes \Tr_{SE}[\ket{\psi}\bra{\psi}_{SAE}])]=M^x_S \text{ for all } x&\nonumber\\
&\qquad\quad \braket{\psi|\Pi_{SA}^x\otimes\mathbb{I}_E|\psi}_{SAE}=\Tr[M^{x}_S\rho_S]
\end{align}

As a warm up for our main result, we first state Theorem \ref{thm:conditions-for-eq}, whose proof we defer to Appendix A. Its first part says that, as expected, any general strategy involving classical side information can be implemented in the quantum picture. Its second part is a sufficient condition for there to be an equality between the classical and the quantum guessing probabilities in this general scenario.

\begin{theorem}\label{thm:conditions-for-eq} Let $\rho_S$ be a state, $\{M_S^x\}$ a POVM and $p^\text{C}_{{\rm guess}}(X|\Lambda,\rho_S,\{M_S^x\}_x)$ and $p^\text{Q}_{{\rm guess}}(X|E,\rho_S,\{M_S^x\}_x)$ as defined in Eqs. \eqref{eq:pguess-extremals-general} and \eqref{eq:pguess-entangled-ancilla-general} respectively. Then,
\begin{enumerate}

\item $p^\text{C}_{{\rm guess}}(X|\Lambda,\rho_S,\{M_S^x\}_x)\leq p^\text{Q}_{{\rm guess}}(X|E,\rho_S,\{M_S^x\}_x)$.

\item If $p^\text{Q}_{{\rm guess}}(X|E,\rho_S,\{M_S^x\}_x)$ has an optimal solution $\langle \{\Pi_{SA}^{x}\}_{x},\ket{\psi}_{SAE},\{M_E^{x}\}_{x}\rangle$ such that the postmeasurement states on SA
\begin{align*}
\rho_{SA}^x=\frac{\Tr_E[(\mathbb{I}_{SA}\otimes M_{E}^{x})\ket{\psi}\bra{\psi}_{SAE}]}{\bra{\psi}\mathbb{I}_{SA}\otimes M_{E}^{x}\ket{\psi}_{SAE}}
\end{align*}
are all separable, then $p^\text{Q}_{{\rm guess}}(X|E,\rho_S,\{M_S^x\}_x)\leq p^\text{C}_{{\rm guess}}(X|\Lambda,\rho_S,\{M_S^x\}_x)$.
\end{enumerate}
\end{theorem}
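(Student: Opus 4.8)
For part 1 the plan is to exhibit an explicit quantum strategy that reproduces any given classical one. Starting from an optimal classical decomposition $p(i,j)$, $\ket{\varphi_i}_S$, $\{M_S^{x,j}\}_x$, I would let Eve's register carry the classical labels. Concretely, pick for each $j$ a Naimark dilation of $\{M_S^{x,j}\}_x$ on $S$ with an ancilla $A'$ in a fixed pure state $\ket{\alpha}_{A'}$ and projectors $\{\Pi_{SA'}^{x,j}\}_x$, enlarge the ancilla to $A=A'\otimes A''$ with $A''$ a register holding $j$, and set the genuinely projective measurement $\Pi_{SA}^x=\sum_j \Pi_{SA'}^{x,j}\otimes\ket{j}\bra{j}_{A''}$. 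The global pure state would be $\ket{\psi}_{SAE}=\sum_{i,j}\sqrt{p(i,j)}\,\ket{\varphi_i}_S\ket{\alpha}_{A'}\ket{j}_{A''}\ket{i,j}_E$, and Eve's measurement would read the label and output $\argmax_x\bra{\varphi_i}M_S^{x,j}\ket{\varphi_i}$. Because everything is diagonal in the orthonormal $E$-labels, the three quantum constraints of Eq.~\eqref{eq:pguess-entangled-ancilla-general} collapse onto the three classical constraints of Eq.~\eqref{eq:pguess-extremals-general} --- in particular the statistics constraint matches the third classical condition --- and the objective reproduces $\sum_{i,j}p(i,j)\max_x\bra{\varphi_i}M_S^{x,j}\ket{\varphi_i}$ exactly, giving $p^\text{Q}_{\rm guess}\ge p^\text{C}_{\rm guess}$.

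For part 2 the direction is reversed: from an optimal quantum strategy with separable post-measurement states I would construct a matching classical one. The starting observation is that the quantum objective rewrites as $\sum_x q(x)\Tr[\Pi_{SA}^x\rho_{SA}^x]$, where $q(x)=\bra{\psi}\mathbb{I}_{SA}\otimes M_E^x\ket{\psi}$ and $\rho_{SA}^x$ is the conditional state in the statement. Using separability I would write $\rho_{SA}^x=\sum_k r_k^x\,\ket{\mu_k^x}\bra{\mu_k^x}_S\otimes\ket{\nu_k^x}\bra{\nu_k^x}_A$ and define a classical strategy indexed by $\ell=(x,k)$ with weight $p(\ell)=q(x)r_k^x$, pure state $\ket{\varphi_\ell}=\ket{\mu_k^x}_S$, and decomposition $M_S^{x'|\ell}=\bra{\nu_k^x}\Pi_{SA}^{x'}\ket{\nu_k^x}_A$, which is a legitimate POVM on $S$ since $\sum_{x'}\Pi_{SA}^{x'}=\mathbb{I}_{SA}$.

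The key step is then to verify the three classical constraints by marginal matching. Summing the $S$-parts gives $\Tr_A[\rho_{SA}]=\rho_S$; summing the $A$-side operators gives $\Tr_A[\Pi_{SA}^{x'}(\mathbb{I}_S\otimes\sigma_A)]=M_S^{x'}$ with $\sigma_A=\Tr_{SE}[\ket{\psi}\bra{\psi}]$; and the weighted diagonal terms give $\Tr[\Pi_{SA}^{x'}\rho_{SA}]=\Tr[M_S^{x'}\rho_S]$. These are exactly the three constraints of Eq.~\eqref{eq:pguess-entangled-ancilla-general} applied to the optimal solution, so all three classical constraints hold (a single index $\ell$ correlating state and measurement being a special case of the joint $p(i,j)$). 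Finally, the bound $\max_{x'}\bra{\mu_k^x,\nu_k^x}\Pi_{SA}^{x'}\ket{\mu_k^x,\nu_k^x}\ge\bra{\mu_k^x,\nu_k^x}\Pi_{SA}^{x}\ket{\mu_k^x,\nu_k^x}$ (take $x'=x$) collapses the classical objective back to $\sum_x q(x)\Tr[\Pi_{SA}^x\rho_{SA}^x]$, the optimal quantum value, so $p^\text{C}_{\rm guess}\ge p^\text{Q}_{\rm guess}$.

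I expect the main obstacle to be part 2, specifically the recognition that separability is precisely the property that lets the $A$-marginals $\ket{\nu_k^x}$ define a valid convex decomposition of the POVM on $S$ (via the fixed projective measurement) while the $S$-marginals $\ket{\mu_k^x}$ supply the correlated pure states; entanglement in $\rho_{SA}^x$ would obstruct this factorization and is exactly what could make the quantum value strictly larger. Once the decomposition and labelling are set up, the verification that the three marginal identities reproduce the three quantum constraints is routine bookkeeping.
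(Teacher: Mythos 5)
Your proposal is correct and takes essentially the same route as the paper's proof: in part 1 your $j$-controlled Naimark dilation $\Pi_{SA}^x=\sum_j\Pi_{SA'}^{x,j}\otimes\ket{j}\bra{j}_{A''}$ acting on the label-carrying purification is precisely the construction the paper realizes explicitly through the unitary $\tilde{U}\ket{\phi,0,j}=\sum_x\sqrt{M_S^{x,j}}\ket{\phi,x,j}$, and in part 2 your objects $\ket{\mu_k^x}$ and $M_S^{x'|\ell}=\bra{\nu_k^x}\Pi_{SA}^{x'}\ket{\nu_k^x}_A$ coincide with the paper's $\rho_{S|x,i}$ and $F^{y}_{S|x,i}=\Tr_A[\Pi_{SA}^y(\mathbb{I}_S\otimes \ket{\phi_i^x}\bra{\phi_i^x}_A)]$, followed by the same marginal-matching verification and the same $x'=x$ lower bound on the max. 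If anything, you are slightly more explicit than the paper in checking the third (statistics-reproduction) constraint in both directions, which the paper's lemmas leave implicit.
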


It is straightforward to see that Theorems \ref{thm:equality-pguesses-mixed-states} and \ref{thm:equality-pguesses-povms} immediately follow as a corollaries of Theorem \ref{thm:conditions-for-eq}. For example, if $\rho_S=\ket{\phi}\bra{\phi}_S$, then the postmeasurement states on $SA$ after any measurement on $E$ are necessary separable (in fact, product), implying then, by Theorem $\ref{thm:conditions-for-eq}$, that $p^\text{C}_{{\rm guess}}(X|\Lambda,\ket{\phi}_S,\{M_S^x\}_x)=p^\text{Q}_{\rm guess}(X|E,\ket{\phi}_S,\{M_S^x\}_x)$. Same reasoning holds for Theorem $\ref{thm:equality-pguesses-mixed-states}$. 
%

From the second part of Theorem \ref{thm:conditions-for-eq} it follows that, if there is to be an advantage for Eve in the quantum scenario, it must come from her preparing an entangled state between $S$ and $A$ via her measurement. Building on this fact, our main result, Theorem \ref{thm:main} below, is the construction of a $4$-outcome qubit measurement (in fact, a family of these) for which Eve's quantum guessing probability is perfect and strictly greater than the classical one.

\begin{theorem}\label{thm:main} There exists a $4$-outcome qubit POVM $\{M_S^{x}\}_x$ such that $$1=p^\text{Q}_{{\rm guess}}(X|E,\frac{\mathbb{I}_S}{2},\{M_S^{x}\}_x)>p^\text{C}_{{\rm guess}}(X|\Lambda,\frac{\mathbb{I}_S}{2},\{M_S^{x}\}_x).$$
\end{theorem}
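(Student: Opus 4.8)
The plan is to construct an explicit $4$-outcome qubit POVM on the maximally mixed state for which Eve can achieve perfect guessing quantumly, and then argue that no classical strategy can match this. The most natural candidate is a POVM whose elements are proportional to rank-one projectors onto four states, for instance the tetrahedral (SIC-POVM) directions or, more conveniently for the calculation, the four BB84-type states $\{\ket{0},\ket{1},\ket{+},\ket{-}\}$ with elements $M_S^x = \tfrac{1}{2}\ket{\psi_x}\bra{\psi_x}$. The first step is to verify that such a choice is a valid POVM, i.e.\ that $\sum_x M_S^x = \mathbb{I}_S$ and each $M_S^x \succeq 0$, and to note that on $\rho_S = \mathbb{I}_S/2$ every outcome occurs with equal probability $\Tr[M_S^x \cdot \mathbb{I}_S/2] = 1/4$.

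The key idea for the quantum side is to exploit the freedom Eve has in choosing the projective implementation and the purification of $\rho_{SA}$. The second step is therefore to build a Naimark dilation $\{\Pi_{SA}^x\}_x$ together with a purifying state $\ket{\psi}_{SAE}$ such that the reduced state on $S$ is $\mathbb{I}_S/2$, the marginal constraint $\mathrm{Tr}_A[\Pi_{SA}^x(\mathbb{I}_S\otimes \sigma_A)] = M_S^x$ holds, and — crucially — the postmeasurement states $\rho_{SA}^x$ that Eve steers by measuring $E$ are \emph{entangled} across $S{:}A$. By the second part of Theorem~\ref{thm:conditions-for-eq}, only such entangled postmeasurement states can open a gap, so the construction must be engineered so that Eve's measurement $\{M_E^x\}_x$ perfectly steers $S$ into the pure state $\ket{\psi_x}$ corresponding to outcome $x$. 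Concretely I would try to set things up so that $\ket{\psi}_{SAE}$ is a maximally entangled state between $SA$ and $E$ (or between $S$ and $AE$), so that an appropriate orthonormal measurement on $E$ projects $S$ onto any desired pure-state ensemble; achieving $\sum_x \bra{\psi}\Pi_{SA}^x\otimes M_E^x\ket{\psi} = 1$ then reduces to matching each $\Pi_{SA}^x$ to the steered branch. Verifying that all three constraints in Eq.~\eqref{eq:pguess-entangled-ancilla-general} are simultaneously satisfied while the sum equals $1$ is where the explicit algebra lives.

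For the classical side, the third step is to prove a strict upper bound $p^\text{C}_{{\rm guess}}(X|\Lambda,\mathbb{I}_S/2,\{M_S^x\}_x) < 1$. Since a classical strategy decomposes $\mathbb{I}_S/2 = \sum_{i,j} p(i,j)\ket{\varphi_i}\bra{\varphi_i}$ and $M_S^x = \sum_{i,j} p(i,j) M_S^{x,j}$ into products of pure states and extremal POVM elements, Eve's guess on each branch is $\max_x \bra{\varphi_i}M_S^{x,j}\ket{\varphi_i}$. The argument is that a single pure state $\ket{\varphi_i}$ cannot be a simultaneous eigenstate giving outcome probability $1$ for more than one of the four non-orthogonal POVM directions, so on each branch the best conditional guess is strictly below $1$; averaging then keeps the total strictly below $1$. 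I would make this quantitative by bounding $\max_x \bra{\varphi_i}M_S^{x,j}\ket{\varphi_i}$ away from $1$ using the fact that extremal qubit POVM elements are rank-one and the four target states are pairwise non-orthogonal, so no pure state aligns perfectly with the relevant POVM element.

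I expect the main obstacle to be the quantum-side construction: one must exhibit, fully explicitly, a projective implementation, a (possibly entangled) purification, and Eve's measurement that \emph{jointly} satisfy the marginal constraint $\mathrm{Tr}_A[\Pi_{SA}^x(\mathbb{I}_S\otimes\sigma_A)] = M_S^x$, the state constraint $\Tr_{AE}[\ket{\psi}\bra{\psi}] = \mathbb{I}_S/2$, and the statistics-reproducing constraint, while still steering $S$ into perfectly distinguishable branches so that the objective reaches $1$. The tension is that perfect guessing requires the steered states on $S$ to be orthogonal and pure, yet their average must be the maximally mixed state and their induced marginal on $S$ must reproduce the non-orthogonal POVM $\{M_S^x\}_x$; reconciling these forces the entanglement between $S$ and $A$ to do real work, and checking that a consistent choice exists (rather than merely being plausible) is the crux of the proof. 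The classical strict inequality, by contrast, should follow from a comparatively routine convexity-plus-non-orthogonality estimate.
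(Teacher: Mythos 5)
There is a genuine gap, and it sits exactly where you locate ``the crux'': your candidate POVMs cannot work, for complementary reasons. Take first the BB84 choice $\{\tfrac12\ket{0}\bra{0},\tfrac12\ket{1}\bra{1},\tfrac12\ket{+}\bra{+},\tfrac12\ket{-}\bra{-}\}$: it is an even convex mixture of the two projective measurements $\{\ket{0}\bra{0},\ket{1}\bra{1}\}$ and $\{\ket{+}\bra{+},\ket{-}\bra{-}\}$, so Eve classically samples a basis with probability $1/2$, decomposes $\mathbb{I}_S/2$ in that same basis, reproduces the observed statistics, and guesses perfectly: $p^\text{C}_{{\rm guess}}=1$ and no gap can exist. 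Your classical-side estimate fails here because the constraint only fixes the \emph{average} $\sum_j p(j)M_S^{x,j}=M_S^x$; the extremal components $M_S^{x,j}$ need not point along the four states $\ket{\psi_x}$, and for BB84 they are basis projectors with which Eve's pure states align exactly. This is precisely why the paper cannot use a ``convexity-plus-non-orthogonality estimate'': it must rule out PM-simulability of the POVM, which it does via a structural lemma ($p^\text{C}_{{\rm guess}}=1$ forces the POVM to be a convex combination of PMs, using that extremal $4$-outcome qubit POVMs are rank-one) combined with the SDP characterization of PM-simulable qubit POVMs of \cite{oszmaniec2017simulating}, checked (numerically) to be infeasible for its chosen measurement.

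The SIC/tetrahedral choice repairs the classical side (the SIC-POVM is extremal with rank-one weight-$1/2$ elements, so $p^\text{C}_{{\rm guess}}\leq 1/2$), but then the quantum side is not merely unverified --- it is false: $p^\text{Q}_{{\rm guess}}=1$ is unachievable for \emph{any} POVM with rank-one elements $M_S^x\propto\ket{\psi_x}\bra{\psi_x}$. Indeed, the dilation constraint forces $Q^x:=(\mathbb{I}_S\otimes\sigma_A^{1/2})\Pi_{SA}^x(\mathbb{I}_S\otimes\sigma_A^{1/2})=\ket{\psi_x}\bra{\psi_x}\otimes\tau_x$ with $\sum_x Q^x=\mathbb{I}_S\otimes\sigma_A$; perfect guessing requires each steered state $\rho_{SA}^x$ to be supported in $\mathrm{supp}\,\Pi_{SA}^x\cap(\mathcal{H}_S\otimes\mathrm{supp}\,\sigma_A)$, and writing $u=(\mathbb{I}_S\otimes\sigma_A^{1/2})v$ for $u\in\mathrm{supp}\,\rho_{SA}^x$ gives $\bra{v}Q^x\ket{v}=\bra{v}(\mathbb{I}_S\otimes\sigma_A)\ket{v}$, hence $\bigl(\sum_{y\neq x}Q^y\bigr)v=0$ and $(\mathbb{I}_S\otimes\sigma_A^{1/2})u\in\ket{\psi_x}\otimes\mathcal{H}_A$, so every $\rho_{SA}^x$ is a \emph{product} state $\ket{\psi_x}\bra{\psi_x}\otimes(\cdot)$. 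By part 2 of Theorem \ref{thm:conditions-for-eq}, separable postmeasurement states imply $p^\text{Q}_{{\rm guess}}\leq p^\text{C}_{{\rm guess}}\leq 1/2$. So rank-one POVMs can never exhibit the gap, and the tension you identify (orthogonal pure steered branches versus non-orthogonal rank-one marginals) is a theorem-level obstruction, not a technicality. The paper evades it by running the construction backwards: it starts from an \emph{entangled} two-qubit basis $\{\ket{\Phi_x^\theta}\}$ (the elegant-joint-measurement family of \cite{tavakoli2021bilocal}) and \emph{defines} $M_S^x:=\Tr_A[\ket{\Phi_x^\theta}\bra{\Phi_x^\theta}]/2$, which are mixed, full-rank elements; then $p^\text{Q}_{{\rm guess}}=1$ holds by construction (Eve holds $\sum_x\tfrac12\ket{\Phi_x^\theta}_{SA}\ket{x}_E$ and steers the ensemble $\{1/4,\ket{\Phi_x^\theta}\}$), and all the genuine work is the classical strict inequality, settled by the PM-simulability lemma plus the SDP check at $\theta=0$.
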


\begin{proof}[Proof sketch.] The proof of Theorem \ref{thm:main} can be found in Appendix B. Here we sketch the main parts. Let $\{\ket{\Phi_x^\theta}\}_{x=1}^4$ be the parametric family, indexed by $\theta\in[0,\pi/2]$, of entangled bases for a space of two qubits defined in \cite[Eq. (3)]{tavakoli2021bilocal}. We set
\begin{align*}
\{\Pi_{SA}^{x,\theta}\}_x=\{\ket{\Phi_x^\theta}\bra{\Phi_x^\theta}\}_x\qquad &\text{and   }&\rho_{SA}=\frac{\mathbb{I}_{SA}}{4},
\end{align*}
and, therefore, 
\begin{align*}
\{M_S^{x,\theta}\}_x=\{\Tr_A[\Pi_{SA}^{x,\theta}\cdot \frac{\mathbb{I}_{SA}}{2}]\}_x\quad &\text{and   }&\rho_S=\frac{\mathbb{I}_{S}}{2}.
\end{align*}

It is straightforward to see that Eve can achieve 
$p^\text{Q}_{{\rm guess}}(X|E,\rho_S,\{M_S^{x,\theta}\}_x)=1$ if she steers the ensemble $\{1/4,\ket{\Phi_x^\theta}\}_x$ on $SA$ by measuring her share of $\ket{\psi}_{SAE}=\sum_x 1/2 \ket{\Phi_x^\theta}_{SA}\ket{x}_E$ in the $\{\ket{x}\}$ basis.
As for the classical guessing probability being strictly below $1$, this follows from two technical results which we prove in the appendix. The first one states that for the settings in which extremal measurements are necessarily rank-one (e.g., $d^2$-outcome measurements, where~$d$ is the dimension of~$S$), having $p^\text{C}_{{\rm guess}}(X|\Lambda,\rho,\{M_S^{x}\}_x)=1$ implies that $\{M_S^{x}\}_x$ is a convex combination of PMs. In \cite[Eq. (56)]{oszmaniec2017simulating}, the class of $4$-outcome qubit POVMs which are convex combinations of PMs was shown to be definable with an semidefinite program (SDP). We numerically checked that for values of $\theta\in[0,\pi/10]$, the POVMs $\{M_S^{x,\theta}\}_x$ are not a convex combination of PMs \cite{pythoncode}. To end the proof, and for concreteness, we set $\{M_S^{x}\}_x=\{M_S^{x,\theta}\}_x$ for $\theta=0$.
\end{proof}

We conclude this section with a second corollary to Theorem \ref{thm:conditions-for-eq}, which applies to a restricted adversarial setting. Consider the case in which the quantum adversary Eve is restricted to having two separate systems $E_1$ and $E_2$, one purifying $\rho_S$ and the other one purifying $\sigma_A$, which she cannot measure jointly. From Theorem \ref{thm:conditions-for-eq} it follows that
\begin{corollary}\label{cor:restricted-adversary} Let $\tilde{p}^\text{Q}_{{\rm guess}}(X|E,\rho_S,\{M_S^x\}_x)$ be as in Eq. \eqref{eq:pguess-entangled-ancilla-general} with the additional restriction that $\ket{\psi}_{SAE}=\ket{\psi_1}_{SE_1}\ket{\psi_2}_{AE_2}$ and $M_E^x=M_{E_1}^x\otimes M_{E_2}^x$. Then,
$$\tilde{p}^\text{Q}_{{\rm guess}}(X|E,\rho_S,\{M_S^x\}_x)\leq p^\text{C}_{{\rm guess}}(X|\Lambda,\rho_S,\{M_S^x\}_x).$$
\end{corollary}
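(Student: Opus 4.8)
The plan is to reduce the claim to the second part of Theorem~\ref{thm:conditions-for-eq}, which guarantees that any quantum strategy whose postmeasurement states on $SA$ are separable can be matched (in guessing probability) by a classical strategy. Thus it suffices to show that the product restriction imposed on the adversary forces \emph{every} feasible strategy of the restricted problem to yield separable---in fact, product---postmeasurement states $\rho_{SA}^x$.

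Concretely, I would take an arbitrary feasible strategy for $\tilde{p}^\text{Q}_{{\rm guess}}$, written as in Eq.~\eqref{eq:pguess-entangled-ancilla-general} but with $E=E_1E_2$, $\ket{\psi}_{SAE}=\ket{\psi_1}_{SE_1}\ket{\psi_2}_{AE_2}$, and $M_E^x=M_{E_1}^x\otimes M_{E_2}^x$, and compute the (unnormalized) postmeasurement state on $SA$. Since $\ket{\psi}\bra{\psi}_{SAE}$ factorizes across the cut $SE_1\,|\,AE_2$ and $M_E^x$ factorizes across $E_1\,|\,E_2$, while $\Tr_E=\Tr_{E_1}\Tr_{E_2}$ acts on the two disjoint factors, the numerator splits as $\tau_S^x\otimes\tau_A^x$, where $\tau_S^x:=\Tr_{E_1}[(\mathbb{I}_S\otimes M_{E_1}^x)\ket{\psi_1}\bra{\psi_1}_{SE_1}]$ and $\tau_A^x:=\Tr_{E_2}[(\mathbb{I}_A\otimes M_{E_2}^x)\ket{\psi_2}\bra{\psi_2}_{AE_2}]$ are both positive semidefinite. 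The normalization is $\bra{\psi}\mathbb{I}_{SA}\otimes M_E^x\ket{\psi}=\Tr[\tau_S^x]\,\Tr[\tau_A^x]$, so
\[
\rho_{SA}^x=\frac{\tau_S^x}{\Tr[\tau_S^x]}\otimes\frac{\tau_A^x}{\Tr[\tau_A^x]},
\]
a product state across $S\,|\,A$, hence in particular separable.

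Finally I would invoke the argument behind the second part of Theorem~\ref{thm:conditions-for-eq}: applied to the optimal restricted strategy---whose postmeasurement states are product by the computation above---it constructs a classical strategy achieving the same value, which yields $\tilde{p}^\text{Q}_{{\rm guess}}(X|E,\rho_S,\{M_S^x\}_x)\leq p^\text{C}_{{\rm guess}}(X|\Lambda,\rho_S,\{M_S^x\}_x)$. I do not anticipate a genuine obstacle here: the only substantive point is the factorization of $\rho_{SA}^x$, which is immediate from the tensor structure of $\ket{\psi}$ and $M_E^x$; all the nontrivial work---turning separable postmeasurement states into an explicit classical decomposition of $\rho_S$ and $\{M_S^x\}_x$ that reproduces the observed statistics---is exactly what Theorem~\ref{thm:conditions-for-eq}(2) already supplies. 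The one detail to double check is that the construction in part~2 is applied to the optimal \emph{restricted} strategy rather than to an optimum of the unrestricted $p^\text{Q}_{{\rm guess}}$, which is legitimate because that construction only uses separability of the postmeasurement states of the particular strategy it is fed.
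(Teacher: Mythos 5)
Your proposal is correct and follows exactly the route the paper intends: the corollary is stated as an immediate consequence of Theorem~\ref{thm:conditions-for-eq}, with the implicit step being precisely your computation that the product structure $\ket{\psi_1}_{SE_1}\ket{\psi_2}_{AE_2}$ together with $M_E^x=M_{E_1}^x\otimes M_{E_2}^x$ forces the postmeasurement states $\rho_{SA}^x$ to factorize across $S\,|\,A$. Your closing remark is also the right reading of the appendix: the proof of the second lemma there only uses separability of the postmeasurement states of whatever feasible strategy it is given, so applying it to the optimal \emph{restricted} strategy is legitimate and yields $\tilde{p}^\text{Q}_{{\rm guess}}\leq p^\text{C}_{{\rm guess}}$.
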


\section{Application to a QRNG}
In \cite[Examples 1-3]{frauchiger2013true}, the following simple model of an imperfect QRNG based on a beam splitter (BS) with inefficient detectors is considered. Let $\ket{\psi}_{12}:=\frac{1}{\sqrt{2}}(\ket{10}+\ket{01})$ be the two-mode state obtained after sending a single photon onto an ideal BS. Let there be detectors with efficiency $\mu\in [0,1]$ at each of two output paths of the BS and let $M_D^1=\mu\ket{1}\bra{1}_D$ be the operator of a two-outcome POVM $\{M_D^0,M_D^1\}$ representing the detection of $1$ photon on path $D\in\{1,2\}$. If we measure each path separately, the overall measurement can be represented by the POVM
\begin{align*}
M_\mu=\{M_{1}^0\otimes M_{2}^0,M_{1}^0\otimes M_{2}^1,M_{1}^1\otimes M_{2}^0,M_{1}^1\otimes M_{2}^1\}.
\end{align*}
As we noted before, in order to use the framework in \cite{konig2009operational} to quantify the unpredictability of this QRNG's outcomes one has to decide on a particular projective implementation of $\{M_S^x\}$. In \cite[Example 3]{frauchiger2013true}, the authors consider the projective implementation $(\{\Pi_{11'}^x\otimes \Pi_{22'}^y\}_{x,y},\sigma_{1'}\otimes \sigma_{2'})$ of $M_\mu$ with
\begin{align}\label{eq:projective-implementation-frauchiger}
\Pi_{DD'}^1&=\ket{1}\bra{1}_{D}\otimes \ket{1}\bra{1}_{D'}\text{ and }\nonumber\\
\sigma_{D'}&=(1-\mu)\ket{0}\bra{0}_{D'}+\mu \ket{1}\bra{1}_{D'}.
\end{align}
In Fig. 1, we plot Eve's guessing probability for this particular projective implementation
\begin{align}
f(\mu):=\max_{\{M_E^{x,y}\}_{x,y}}\sum_{x,y}\bra{\psi}\Pi_{11'}^x\otimes \Pi_{22'}^y\otimes M_E^{x,y}\ket{\psi}_{11'22'E}
\end{align}
with $\ket{\psi}_{11'22'E}$ a fixed purification of $\ket{\psi}\bra{\psi}_{12}\otimes \sigma_{1'}\otimes \sigma_{2'}$ and compare it to $p^\text{Q}_{{\rm guess}}(X|E,\ket{\psi}_{12},M_\mu)$, as a function of the (decreasing) efficiency $\mu$ of the detectors \cite{pythoncode}.

\begin{figure}[h!]
\includegraphics[scale=0.6]{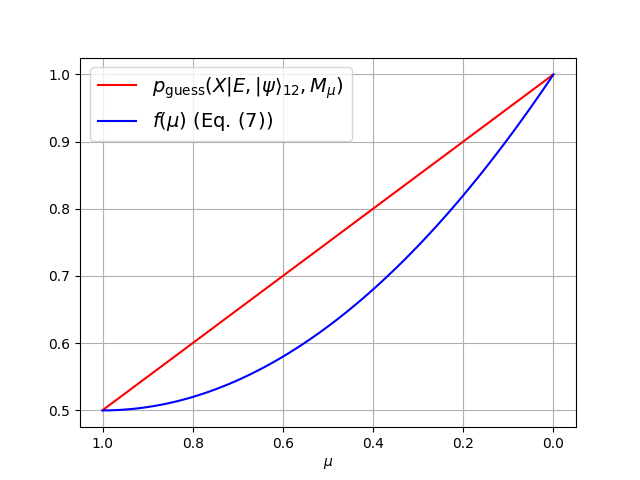}
\caption{Comparison between our work and \cite{frauchiger2013true}. In this plot we show that if, rather than assuming the particular projective implementation of inefficient detectors on the outputs of an ideal BS used in \cite{frauchiger2013true} and reproduced in Eq. \ref{eq:projective-implementation-frauchiger}, we let it chosen by the Eve, her guessing probability is strictly bigger for every value of the efficiency $\mu\in(0,1)$.\label{fig:pguess-comp}}
\end{figure}

We see that, for every value of the efficiency $\mu\in(0,1)$, fixing the particular projective extension in Eq. \eqref{eq:projective-implementation-frauchiger} strictly decreases Eve's maximal guessing probability. In other words, the projective implementation in Eq. \eqref{eq:projective-implementation-frauchiger} leads to an underestimation of Eve's guessing probability for every $\mu\in(0,1)$.

\section{Conclusions}
In this work, we have studied the unpredictability of the outcomes of a general quantum measurement from the point of view of an eavesdropper holding side information correlated to, both, the state of the system and the measurement. 
We have shown that, while the quantum and classical guessing probabilities coincide in the case of extremal states (i.e. pure) or extremal measurements, equivalence does not hold in general.

The classical and quantum guessing probabilities not coinciding in the general scenario has immediate consequences for the design of device-dependent QRNGs, as proper justifications should be issued regarding which of the two pictures is assumed. However, being the characterization of intrinsic randomness quite a general quantum mechanical problem, our result might find applicability beyond QRNGs.

As for future research directions, it would be interesting to characterize the set of states and measurements for which there is a quantum advantage in the guessing probability. Last but not least, from a practical perspective, it is important to come up with ways to compute (or, at least, computably approximate from above) these quantities.

\medskip

\emph{Note added:} while completing this work we became aware of a recent work~\cite{newpaper} in which a similar approach to characterise the intrinsic randomness under quantum measurements is introduced.

\medskip

\begin{acknowledgments}
\emph{Acknowledgments.} We thank M\'at\'e Farkas for fruitful discussions. We acknowledge financial support from the ERC AdG CERQUTE, the EU project QRANGE (Grant No. 820405), the AXA Chair in Quantum Information Science, the Government of Spain (FIS2020-TRANQI, NextGen Recovery Funds and Severo Ochoa CEX2019-000910-S), Fundaci\'o Cellex, Fundaci\'o Mir-Puig and Generalitat de Catalunya (CERCA, AGAUR SGR 1381).
\end{acknowledgments}

\pagebreak

\begin{widetext}

\appendix

\section{Appendix A: Proof of Theorem \ref{thm:conditions-for-eq}}
We will prove each part of Theorem \ref{thm:conditions-for-eq} in two different lemmas.

\begin{lemma} $p^\text{C}_{{\rm guess}}(X|\Lambda,\rho_S,\{M_S^x\}_x)\leq p^\text{Q}_{{\rm guess}}(X|E,\rho_S,\{M_S^x\}_x)$.
\end{lemma}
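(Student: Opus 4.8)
The plan is to show that every feasible classical strategy can be reproduced, with the same objective value, by a feasible quantum strategy; since $p^\text{Q}_{\rm guess}$ is a maximum over all such strategies, the inequality follows immediately. So I would fix an optimal classical decomposition $\{p(i,j),\ket{\varphi_i}_S,\{M_S^{x,j}\}_x\}$ attaining $p^\text{C}_{{\rm guess}}(X|\Lambda,\rho_S,\{M_S^x\}_x)$ and build from it a triple $(\{\Pi_{SA}^x\}_x,\ket{\psi}_{SAE},\{M_E^x\}_x)$ that is feasible for Eq.~\eqref{eq:pguess-entangled-ancilla-general}.

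First I would dilate the measurements. For each label $j$, Naimark's theorem supplies an ancilla $A_0$, a fixed pure state $\ket{0}_{A_0}$, and a projective measurement $\{\tilde\Pi_{SA_0}^{x,j}\}_x$ with $\bra{0}\tilde\Pi_{SA_0}^{x,j}\ket{0}_{A_0}=M_S^{x,j}$. To implement the whole family with a \emph{single} projective measurement I introduce a control register $A_1$ holding $j$, set $A=A_1A_0$, and define $\Pi_{SA}^x=\sum_j \ket{j}\bra{j}_{A_1}\otimes\tilde\Pi_{SA_0}^{x,j}$; this is a genuine PM, since the blocks are mutually orthogonal and $\sum_x\Pi_{SA}^x=\mathbb{I}_{SA}$. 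I then purify the classical randomness into Eve's system by setting $\ket{\psi}_{SAE}=\sum_{i,j}\sqrt{p(i,j)}\,\ket{\varphi_i}_S\ket{j}_{A_1}\ket{0}_{A_0}\ket{i,j}_E$.

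Next I would check the three feasibility constraints, each of which reduces termwise to one of the classical constraints. Tracing $A$ and $E$ out of $\ket{\psi}\bra{\psi}$ gives $\sum_{i,j}p(i,j)\ket{\varphi_i}\bra{\varphi_i}=\rho_S$. The reduced ancilla state is $\sigma_A=\sum_j p(j)\ket{j}\bra{j}_{A_1}\otimes\ket{0}\bra{0}_{A_0}$ with $p(j)=\sum_i p(i,j)$, and a short computation gives $\mathrm{Tr}_A[\Pi_{SA}^x(\mathbb{I}_S\otimes\sigma_A)]=\sum_j p(j)M_S^{x,j}=M_S^x$, the POVM constraint. The statistics constraint $\braket{\psi|\Pi_{SA}^x\otimes\mathbb{I}_E|\psi}=\Tr[M_S^x\rho_S]$ collapses, after projecting $E$, to $\sum_{i,j}p(i,j)\bra{\varphi_i}M_S^{x,j}\ket{\varphi_i}$, which equals $\Tr[M_S^x\rho_S]$ by the last classical constraint.

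Finally I would let Eve read her label by measuring $E$ in the $\{\ket{i,j}\}$ basis, grouping outcomes by her classical guess $g(i,j):=\argmax_x\bra{\varphi_i}M_S^{x,j}\ket{\varphi_i}$, i.e. $M_E^x=\sum_{(i,j):g(i,j)=x}\ket{i,j}\bra{i,j}_E$. Projecting $E$ onto $\ket{i,j}$ collapses $SA$ onto $\ket{\varphi_i}\ket{j}\ket{0}$, so that $\bra{\psi}\Pi_{SA}^x\otimes\ket{i,j}\bra{i,j}_E\ket{\psi}=p(i,j)\bra{\varphi_i}M_S^{x,j}\ket{\varphi_i}$, and summing over $x$ with the above grouping yields exactly $\sum_{i,j}p(i,j)\max_x\bra{\varphi_i}M_S^{x,j}\ket{\varphi_i}=p^\text{C}_{{\rm guess}}$. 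The main design choice — the only genuinely nontrivial step — is engineering one projective measurement on $SA$ that simultaneously dilates the entire family $\{M_S^{x,j}\}_j$; the control-register trick handles this, after which feasibility and the matching objective value are pure bookkeeping.
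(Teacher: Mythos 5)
Your proposal is correct and takes essentially the same approach as the paper: the paper also builds a single projective dilation controlled on a register carrying the decomposition label $j$ (via the explicit isometry $\tilde{U}\ket{\phi,0,j}=\sum_x \sqrt{M_S^{x,j}}\ket{\phi,x,j}$ and $\Pi_{SA}^x=U^\dagger(\mathbb{I}_S\otimes\ket{x}\bra{x}\otimes\mathbb{I}_{A_2})U$), purifies the classical randomness into Eve's register via $\ket{\psi}_{SAE}=\sum_{i,j}\sqrt{p(i,j)}\ket{\varphi_i}_S\ket{0,j}_{A}\ket{i,j}_E$, and has Eve measure in the $\{\ket{i,j}\}$ basis with outcomes grouped by the maximizing $x$. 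The only cosmetic differences are that you invoke Naimark abstractly for each $j$ where the paper constructs the dilation explicitly, and that the paper additionally fixes a tie-breaking rule in the $\argmax$ (its sets $A_x$), which your $g(i,j)$ needs implicitly.
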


\begin{proof}
Let $\{(p(i,j),\{M_S^{x,j}\}_{x},\ket{\varphi_i})\}$ be an optimal solution to $p^\text{C}_{{\rm guess}}(X|\Lambda,\rho_S,\{M_S^x\}_x)$. Consider a bipartite ancillary system $A=A_1A_2$ initially in the state 
$$\sigma_{A}=\ket{0}\bra{0}_{A_1}\otimes\sum_{j} p(j) \ket{j}\bra{j}_{A_2}$$ 
with $\dim(\mathcal{H}_{A_1})=|X|$. Let Eve hold the purification
$$\ket{\psi}_{SAE}=\sum_{i,j} \sqrt{p(i,j)}\ket{\varphi_{i}}_S\ket{0,j}_{A_1A_2}\ket{i,j}_E.$$
Notice that $\Tr_{AE}[\ket{\psi}\bra{\psi}]=\rho_S$ and $\Tr_{SE}[\ket{\psi}\bra{\psi}]=\sigma_A$. We define the action of an operator $\tilde{U}$ as
$$\tilde{U}\ket{\phi,0,j}_{SA_1A_2}=\sum_x \sqrt{M_S^{x,j}}\ket{\phi,x,j}_{SA_1A_2},$$
for arbitrary $\ket{\phi}_S\in\mathcal{H}_S$ and let $U$ be any unitary extension of $\tilde{U}$ to the whole space $\mathcal{H}_{S}\otimes\mathcal{H}_{A_1}\otimes\mathcal{H}_{A_2}$. Let $$\Pi_{SA}^x=U^\dagger(\mathbb{I}_S\otimes \ket{x}\bra{x}\otimes \mathbb{I}_{A_2})U.$$ 
Notice that, for a given $\ket{\phi}_S\in\mathcal{H}_S$,
\begin{align*}
\bra{\phi,0,j}\Pi_{SA}^x\ket{\phi,0,j}&=\bra{\phi,0,j}U^\dagger(\mathbb{I}_S\otimes \ket{x}\bra{x}\otimes \mathbb{I}_{A_2})U \ket{\phi,0,j}\\
&=\sum_{x',x''}\bra{\phi}\sqrt{M^{x',j}_S}^\dagger\bra{x',j}(\mathbb{I}_S\otimes \ket{x}\bra{x}\otimes \mathbb{I}_{A_2})\sqrt{M^{x'',j}_S}\ket{\phi}\ket{x'',j}\\
&=\bra{\phi}M^{x,j}_S\ket{\phi}
\end{align*}
and so
\begin{align*}
\Tr[\Pi_{SA}^x \ket{\phi}\bra{\phi}_S\otimes\sigma_A]&=\sum_{i,j}p(i,j)\bra{\phi,0,j}\Pi_{SA}^x\ket{\phi,0,j}\\
&=\sum_{i,j} p(i,j)\bra{\phi}M^{x,j}_S\ket{\phi}=\bra{\phi}M^{x}_S\ket{\phi}.
\end{align*}
Therefore, $(\{\Pi_{SA}^x\}_x,\sigma_A)$ gives a projective extension of $\{M_S^x\}_x$. Next, we define the measurement $\{M_E^x\}_x$ on Eve's subsystem. $M_E^x$ will project onto the subspace spanned by the states $\ket{i,j}$ such that $x$ maximizes the Born rule for the $i$-th state and $j$-th POVM in the decompositions of $\rho_S$ and $\{M_S^x\}_x$ respectively. That is, 
 \begin{align*}
M_E^x&=\sum_{(i,j)\in A_x}\ket{i,j}\bra{i,j}
\end{align*}
with
\begin{align}\label{eq:set-ax}
A_x&=\{(i,j)\mid x=\min \{y|\braket{\varphi_i|M^{y,j}_S|\varphi_i}=\max_z \braket{\varphi_i|M^{z,j}_S|\varphi_i}\}\}.
\end{align}
The minimisation in Eq. \eqref{eq:set-ax} is there because we want that any given $(i,j)$ is included in one and only one $A_x$ (a priori, for a given $(i,j)$ there could be distinct $x_1$ and $x_2$ such that $\braket{\varphi_i|M^{x_1,j}_S|\varphi_i}=\braket{\varphi_i|M^{x_2,j}_S|\varphi_i}$ is maximum).

\medskip

With this, we have that $\langle \{\Pi^x_{SA}\}_x,\ket{\psi}_{SAE},\{M^x_E\}_x\rangle$ is a solution to \eqref{eq:pguess-entangled-ancilla-general} with value
\begin{align*}
\sum_x \bra{\psi}\Pi_{SA}^x\otimes M_E^x\ket{\psi}_{SAE}&=\sum_x \sum_{(i,j)\in A_x} \bra{\psi}\Pi_{SA}^x\otimes \ket{i,j}\bra{i,j}\ket{\psi}_{SAE}\\
&=\sum_x \sum_{(i,j)\in A_x} p(i,j)\braket{\varphi_i|M_S^{x,j}|\varphi_i}\\
&=\sum_{i,j}p(i,j) \max_x \bra{\varphi_i}M_S^{x,j}\ket{\varphi_i}_S\\
&= p^\text{C}_{{\rm guess}}(X|\lambda,\rho_S,\{M_S^x\}_x).
\end{align*}
Therefore, 
\begin{align*}
p^\text{Q}_{{\rm guess}}(X|E,\rho_S,\{M_S^x\}_x)&\geq \sum_x \bra{\psi}\Pi_{SA}^x\otimes M_E^x\ket{\psi}_{SAE} \\
&\geq p^\text{C}_{{\rm guess}}(X|\lambda,\rho_S,\{M_S^x\}_x).
\end{align*}
\end{proof}
\medskip

\begin{lemma} If $p^\text{Q}_{{\rm guess}}(X|E,\rho_S,\{M_S^x\}_x)$ has an optimal solution $\langle \{\Pi_{SA}^{x}\}_{x},\ket{\psi}_{SAE},\{M_E^{x}\}_{x}\rangle$ such that the postmeasurement states on SA
\begin{align*}
\rho_{SA}^x=\frac{\Tr_E[(\mathbb{I}_{SA}\otimes M_{E}^{x})\ket{\psi}\bra{\psi}_{SAE}]}{\bra{\psi}\mathbb{I}_{SA}\otimes M_{E}^{x}\ket{\psi}_{SAE}}
\end{align*}
are all separable, then $p^\text{Q}_{{\rm guess}}(X|E,\rho_S,\{M_S^x\}_x)\leq p^\text{C}_{{\rm guess}}(X|\Lambda,\rho_S,\{M_S^x\}_x)$.
\end{lemma}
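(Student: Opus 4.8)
The plan is to convert the given optimal quantum strategy into a \emph{feasible} classical strategy whose value is at least as large, which immediately yields the claimed inequality $p^\text{Q}_{{\rm guess}}(X|E,\rho_S,\{M_S^x\}_x)\leq p^\text{C}_{{\rm guess}}(X|\Lambda,\rho_S,\{M_S^x\}_x)$. Writing $p(x)=\bra{\psi}\mathbb{I}_{SA}\otimes M_E^x\ket{\psi}_{SAE}$, I would first rewrite the quantum objective as $\sum_x\bra{\psi}\Pi_{SA}^x\otimes M_E^x\ket{\psi}_{SAE}=\sum_x p(x)\,\Tr[\Pi_{SA}^x\rho_{SA}^x]$, so that the value is expressed entirely through the postmeasurement states $\rho_{SA}^x$. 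The hypothesis that every $\rho_{SA}^x$ is separable then lets me fix, for each $x$, a decomposition into pure product states $\rho_{SA}^x=\sum_k q_k^x\,\ket{a_k^x}\bra{a_k^x}_S\otimes\ket{b_k^x}\bra{b_k^x}_A$ (legitimate since any separable state admits such a pure-product form).

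From this I would read off a classical strategy feasible for \eqref{eq:pguess-extremals-general}, indexed by the single composite label $\mu=(x,k)$, with weight $P(\mu)=p(x)q_k^x$, pure state $\ket{\varphi_\mu}:=\ket{a_k^x}$ on $S$, and POVM $M_S^{y,\mu}:=\bra{b_k^x}_A\Pi_{SA}^y\ket{b_k^x}_A$ obtained by sandwiching the projective implementation with the $A$-component of the product decomposition. Completeness $\sum_y\Pi_{SA}^y=\mathbb{I}_{SA}$ guarantees each $\{M_S^{y,\mu}\}_y$ is a genuine POVM on $S$. In the notation of \eqref{eq:pguess-extremals-general} this amounts to perfectly correlating the state index $i$ and the measurement index $j$ through $\mu$.

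The substance of the argument is then verifying the three constraints of \eqref{eq:pguess-extremals-general}, all of which reduce to the single identity $\sum_x p(x)\rho_{SA}^x=\rho_{SA}=\Tr_E[\ket{\psi}\bra{\psi}_{SAE}]$, valid because $\sum_x M_E^x=\mathbb{I}_E$. Tracing this over $A$ (resp.\ over $S$) recovers $\rho_S$ (resp.\ $\sigma_A$), which yields the state-decomposition constraint $\sum_\mu P(\mu)\ket{\varphi_\mu}\bra{\varphi_\mu}=\rho_S$ and, via $\Tr_A[\Pi_{SA}^y(\mathbb{I}_S\otimes\sigma_A)]=M_S^y$, the measurement-decomposition constraint $\sum_\mu P(\mu)M_S^{y,\mu}=M_S^y$; pairing the identity with $\Pi_{SA}^y$ and invoking the quantum statistics constraint gives $\sum_\mu P(\mu)\braket{\varphi_\mu|M_S^{y,\mu}|\varphi_\mu}=\Tr[\Pi_{SA}^y\rho_{SA}]=\Tr[M_S^y\rho_S]$, the third constraint. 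Finally, the value of this strategy is $\sum_{x,k}p(x)q_k^x\max_y\braket{a_k^x|M_S^{y,\mu}|a_k^x}\geq\sum_{x,k}p(x)q_k^x\bra{a_k^x,b_k^x}\Pi_{SA}^x\ket{a_k^x,b_k^x}$, where I simply choose $y=x$ in each maximisation; the right-hand side is exactly the quantum value $\sum_x p(x)\Tr[\Pi_{SA}^x\rho_{SA}^x]$, closing the chain.

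The main obstacle I anticipate is not any single hard estimate but ensuring the reduction is legitimate on two fronts: first, that separability may be taken with \emph{pure} $S$-components, which is essential so that the $\ket{\varphi_\mu}$ are admissible states in the classical program (automatic from the pure-product decomposition of a separable state); and second, that $P(\mu)$ together with $\{M_S^{y,\mu}\}_y$ satisfy \emph{all} reproducibility constraints simultaneously. The statistics constraint is the most delicate of the three, since it couples the $S$- and $A$-components through $\Pi_{SA}^y$, and it is precisely there that the optimality and feasibility of the original quantum solution must be used through the third line of \eqref{eq:pguess-entangled-ancilla-general}.
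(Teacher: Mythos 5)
Your proposal is correct and follows essentially the same route as the paper's proof: both convert the optimal quantum solution into a feasible classical strategy by decomposing each separable postmeasurement state $\rho_{SA}^x$ into pure product states, defining the conditional POVMs by sandwiching the $\Pi_{SA}^y$ with the $A$-components (your $M_S^{y,\mu}=\bra{b_k^x}\Pi_{SA}^y\ket{b_k^x}_A$ is exactly the paper's $F^y_{S|x,i}$), and lower-bounding the max by choosing $y=x$. If anything, you are slightly more explicit than the paper in verifying the third (statistics-reproduction) constraint via $\sum_x p(x)\rho_{SA}^x=\rho_{SA}$ and the last line of Eq.~\eqref{eq:pguess-entangled-ancilla-general}, a step the paper leaves implicit.
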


\begin{proof}
Let $\langle \{\Pi_{SA}^{x}\}_{x},\ket{\psi}_{SAE},\{M_E^{x}\}_{x}\rangle$ be an optimal solution to $p^\text{Q}_{{\rm guess}}(X|E,\rho_S,\{M_S^x\}_x)$ and let
$$\tau_{SA}^x=\frac{\Tr_E[(\mathbb{I}_S\otimes\mathbb{I}_A\otimes M_{E}^{x})\ket{\psi}\bra{\psi}_{SAE}]}{p(x)}$$
with
$$p(x)=\Tr[(\mathbb{I}_S\otimes\mathbb{I}_A\otimes M_{E}^{x})\ket{\psi}\bra{\psi}_{SAE}].$$
Notice that,
\begin{align*}
p^\text{Q}_{{\rm guess}}(X|E,\rho_S,\{M_S^x\}_x)=\sum_x p(x)\Tr[\Pi_{SA}^x \tau_{SA}^x].
\end{align*}

Let us now assume that the states $\tau_{SA}^x$ are separable. Then, 

\begin{align*}
\tau_{SA}^x&=\sum_{i} p(i|x) \ket{\varphi_i^x}\bra{\varphi_i^x}_S\otimes \ket{\phi_i^x}\bra{\phi_i^x}_A\\
\end{align*}

We define 
\begin{align*}
\rho_{S|x,i}&:=\ket{\varphi_i^x}\bra{\varphi_i^x}_S\\
F^{y}_{S|x,i}&:=\Tr_A[\Pi_{SA}^y(\mathbb{I}_S\otimes \ket{\phi_i^x}\bra{\phi_i^x}_A)]
\end{align*}
Notice that,
\begin{align*}
\sum_{x,i}p(x,i)\rho_{S|x,i}&=\sum_x p(x)\sum_i p(i|x)\ket{\varphi_i^x}\bra{\varphi_i^x}=\sum_{x}p(x)\rho_{S}^x=\rho_S\\
\sum_{x,i}p(x,i)F^{y}_{S|x,i}&=\sum_x p(x)\sum_i p(i|x)\Tr_A[\Pi_{SA}^y(\mathbb{I}_S\otimes \ket{\phi_i^x}\bra{\phi_i^x}_A)]\\
&=\Tr_A[\Pi_{SA}^y(\mathbb{I}_S\otimes \sum_{x} p(x)\sum_i p(i|x) \ket{\phi_i^x}\bra{\phi_i^x}_A)]=\Tr_A[\Pi_{SA}^y(\mathbb{I}_S\otimes \sigma_A)]=M^y_{S}.
\end{align*}
and that $F^{y}_{S|x,j}\succcurlyeq 0$ for all $y,x$ and $i$ and satisfies
\begin{align*}
\sum_y F^{y}_{S|x',j}&=\sum_y \Tr_A[\Pi_{SA}^{y}(\mathbb{I}_S\otimes \ket{\phi_i^x}\bra{\phi_i^x}_A)]\\
&=\Tr_A[(\sum_y \Pi_{SA}^{y})(\mathbb{I}_S\otimes \ket{\phi_i^x}\bra{\phi_i^x}_A)]\\
&=\Tr_A[(\mathbb{I}_S\otimes \mathbb{I}_A)(\mathbb{I}_S\otimes \ket{\phi_i^x}\bra{\phi_i^x}_A)]\\
&=\mathbb{I}_S.
\end{align*}
We make the following change of variables:
\begin{align*}
z&\to (x,i).
\end{align*}
Therefore, $\{(p(z),\rho_{S|z},\{F^{y}_{S|z}\}_y)\}$ is a solution to Eq. \eqref{eq:pguess-extremals-general} with value
\begin{align*}
\sum_{z} p(z)\max_y\Tr[F^y_{S|z} \rho_{S|z}]&= \sum_{x,i} p(x,i)\max_y\Tr[F^y_{S|x,i} \rho_{S|x,i}]\\
&= \sum_{x,i} p(x,i)\max_y\Tr_S[\Tr_A[\Pi_{SA}^{y}(\mathbb{I}_S\otimes \ket{\phi_i^x}\bra{\phi_i^x}_A)]\ket{\varphi_i^x}\bra{\varphi_i^x}_S]\\
&\geq \sum_{x,i} p(x,i)\Tr_S[\Tr_A[\Pi_{SA}^{x}(\mathbb{I}_S\otimes \ket{\phi_i^x}\bra{\phi_i^x}_A)]\ket{\varphi_i^x}\bra{\varphi_i^x}_S]\\
&=\sum_{x} p(x) \sum_i p(i|x) \Tr[\Pi_{SA}^{x}(\ket{\varphi_i^x}\bra{\varphi_i^x}_S\otimes \ket{\phi_i^x}\bra{\phi_i^x}_A)]\\
&=\sum_{x} p(x) \Tr[\Pi_{SA}^{x}\tau_{SA}^x]\\
&= p^\text{Q}_{{\rm guess}}(X|E,\rho_S,\{M_S^x\}_x)
\end{align*}

\end{proof}

\paragraph{Proof of Theorem \ref{thm:conditions-for-eq}} Immediate from the above two lemmas.

\section{Appendix B: Proof of the main result (Theorem \ref{thm:main})}
In this section we will show that there exists a qubit state $\rho_S$ and a $4$-outcome POVM $\{M_S^x\}_x$ such that $p^\text{Q}_{{\rm guess}}(X|E,\rho_S,\{M_S^x\}_x)>p^\text{C}_{{\rm guess}}(X|\Lambda,\rho_S,\{M_S^x\}_x)$.

\medskip

To simplify the notation, in this section we will work with the following equivalent definition of the classical guessing probability, where we have condensed the indexes $i$ and $j$ for the decompositions of the state and the measurement into a single index $\lambda$:

\begin{align}
\label{eq:pguess-extremals-general-bis}
&~p^\text{C}_{{\rm guess}}(X|\Lambda,\rho_S,\{M_S^x\}_x):= \max_{p(\lambda),\ket{\varphi_\lambda}_S,\{M_S^{x,\lambda}\}_{x}}\sum_{\lambda}p(\lambda) \max_x \bra{\varphi_\lambda}M_S^{x,\lambda}\ket{\varphi_\lambda}_S &&\nonumber\\
&\qquad~\textrm{subject to } &\nonumber\\
&\qquad\qquad \sum_{\lambda} p(\lambda)\ket{\varphi_\lambda}\bra{\varphi_\lambda}_S = \rho_S&\nonumber\\
&\qquad\qquad \sum_{\lambda} p(\lambda)M_S^{x,\lambda} = M_S^{x} \text{ for all } x&\nonumber\\
&\qquad\qquad \sum_{\lambda}p(\lambda)\braket{\varphi_\lambda|M_S^{x,\lambda}|\varphi_\lambda}_S=\Tr[M_S^x\rho_S]\text{ for all } x
\end{align}

We begin by proving some small technical lemmas. The first one is a construction of states and measurements for which $p^\text{Q}_{{\rm guess}}(X|E,\rho_S,\{M_S^x\}_x)=1$.

\begin{lemma}[State and measurement perfect for quantum Eve]\label{lemma:state-and-measurement-trivial-for-quantum} Let $\{\ket{\psi^x}_{SA}\}_x$ be a basis for $\mathcal{H}_S\otimes \mathcal{H}_A$, with $d={\rm dim}(\mathcal{H}_S)={\rm dim}(\mathcal{H}_A)$, and let
\begin{align*}
\rho_{S}&:=\Tr_A [\frac{1}{d^2}\sum_x  \ket{\psi^x}\bra{\psi^x}_{SA}^x]\\
&=\mathbb{I}/d\\
M_S^x&:=\Tr_A[\ket{\psi^x}\bra{\psi^x}_{SA}(\mathbb{I}_S\otimes \Tr_S [\frac{1}{d^2}\sum_x  \ket{\psi^x}\bra{\psi^x}_{SA}^x])]\\
&=\Tr_A[\ket{\psi^x}\bra{\psi^x}_{SA}]/d
\end{align*}
Then, $p^\text{Q}_{{\rm guess}}(X|E,\rho_S,\{M_S^x\}_x)=1$.
\end{lemma}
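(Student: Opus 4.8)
The plan is to prove the equality by exhibiting an explicit feasible point of the optimization in Eq.~\eqref{eq:pguess-entangled-ancilla-general} that attains the value $1$; since the objective is a sum of Born probabilities and therefore bounded above by $1$, producing one such point suffices. The guiding intuition is that the $d^2$ basis vectors $\{\ket{\psi^x}_{SA}\}_x$ are mutually orthogonal, so a purification that ``flags'' each $\ket{\psi^x}_{SA}$ by an orthogonal state $\ket{x}_E$ in Eve's register lets her steer $SA$ perfectly into whichever $\ket{\psi^x}$ she selects; the projective measurement onto that very basis then reveals the outcome with certainty.

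Concretely, I would make three choices. For the projective implementation take $\Pi_{SA}^x=\ket{\psi^x}\bra{\psi^x}_{SA}$, which is a complete PM because $\{\ket{\psi^x}\}_x$ is an orthonormal basis of $\mathcal{H}_S\otimes\mathcal{H}_A$. For Eve's purification introduce a register $E$ of dimension $d^2$ with orthonormal basis $\{\ket{x}_E\}$ and set
\begin{align*}
\ket{\psi}_{SAE}=\frac{1}{d}\sum_x \ket{\psi^x}_{SA}\ket{x}_E.
\end{align*}
Finally let Eve measure $\{M_E^x\}_x=\{\ket{x}\bra{x}_E\}_x$. Normalization of $\ket{\psi}_{SAE}$ and the identity $\frac{1}{d^2}\sum_x\ket{\psi^x}\bra{\psi^x}_{SA}=\mathbb{I}_{SA}/d^2$ both follow at once from orthonormality.

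The next step is to check feasibility, i.e.\ that this triple satisfies the three constraints of Eq.~\eqref{eq:pguess-entangled-ancilla-general}. Tracing $E$ out of $\ket{\psi}\bra{\psi}_{SAE}$ gives $\mathbb{I}_{SA}/d^2$ (the flags being orthogonal kills the cross terms), whose partial traces are $\rho_S=\mathbb{I}_S/d$ and $\sigma_A:=\Tr_{SE}[\ket{\psi}\bra{\psi}_{SAE}]=\mathbb{I}_A/d$, matching the first constraint and the definitions of $\rho_S$ and $M_S^x$ in the statement. The second constraint $\Tr_A[\Pi_{SA}^x(\mathbb{I}_S\otimes\sigma_A)]=M_S^x$ is then exactly the defining formula $M_S^x=\Tr_A[\ket{\psi^x}\bra{\psi^x}_{SA}]/d$. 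For the third constraint one computes $\bra{\psi}\Pi_{SA}^x\otimes\mathbb{I}_E\ket{\psi}_{SAE}=1/d^2$ and verifies $\Tr[M_S^x\rho_S]=\Tr[\ket{\psi^x}\bra{\psi^x}_{SA}]/d^2=1/d^2$ as well.

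It then remains to evaluate the objective. Since $(\Pi_{SA}^x\otimes M_E^x)\ket{\psi}_{SAE}=\frac{1}{d}\ket{\psi^x}_{SA}\ket{x}_E$ — orthogonality of both the $\ket{\psi^{x'}}$ and the flags $\ket{x'}$ removing every cross term — each term contributes $\bra{\psi}\Pi_{SA}^x\otimes M_E^x\ket{\psi}_{SAE}=1/d^2$, so that $\sum_x \bra{\psi}\Pi_{SA}^x\otimes M_E^x\ket{\psi}_{SAE}=d^2\cdot(1/d^2)=1$. Together with the trivial bound $p^\text{Q}_{\rm guess}\le 1$ this gives $p^\text{Q}_{\rm guess}(X|E,\rho_S,\{M_S^x\}_x)=1$. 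The argument is entirely constructive, so there is no genuine obstacle; the only point demanding care is the bookkeeping of the partial traces, and in particular keeping separate the two distinct uses of orthogonality — the orthogonality of the flags $\ket{x}$ drives the perfect steering, while the orthonormality of the basis $\ket{\psi^x}$ is what makes the reduced states equal to $\rho_S$ and $\sigma_A$ and hence certifies feasibility.
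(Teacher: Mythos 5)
Your proposal is correct and follows essentially the same route as the paper's own proof: the identical feasible triple $\Pi_{SA}^x=\ket{\psi^x}\bra{\psi^x}_{SA}$, $\ket{\psi}_{SAE}=\frac{1}{d}\sum_x\ket{\psi^x}_{SA}\ket{x}_E$, $M_E^x=\ket{x}\bra{x}_E$, with each of the $d^2$ terms contributing $1/d^2$ to the objective. Your feasibility check is in fact slightly more explicit than the paper's (which verifies the statistics constraint via $\rho_{SA}=\rho_S\otimes\rho_A$ rather than by direct computation), but the argument is the same.
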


\begin{proof}
Let $\ket{\psi}_{SAE}=\sum_x \frac{1}{d}\ket{\psi^x}_{SA}\ket{x}_E$, $\{\Pi_{SA}^x\}_x=\{\ket{\psi^x}\bra{\psi^x}_{SA}\}_x$ and $\{M_E^x\}_{x}=\{\ket{x}\bra{x}\}_x$. By construction, $\langle \ket{\psi}_{SAE},\{\Pi_{SA}^x\}_x,\{M_E^x\}_x\rangle$ satisfies the first two conditions the definition of $p^\text{Q}_{{\rm guess}}(X|E,\rho_S,\{M_S^x\}_x)$ in Eq. \eqref{eq:pguess-entangled-ancilla-general}. To see that it also satisfies the last one, notice that $\rho_{SA}=\rho_S\otimes\rho_A$ and hence
$\Tr[\Pi_{SA}^x\rho_{SA}]=\Tr_S[\rho_S\Tr_A[\Pi_{SA}^x(\mathbb{I}\otimes\rho_A)]=\Tr[M_S^x\rho_S]$. Finally, notice that
$\bra{\psi}\Pi_{SA}^{x}\otimes M_E^{x}\ket{\psi}_{SAE}=\bra{\psi}\ket{\psi^x}\bra{\psi^x}\otimes \ket{x}\bra{x}\ket{\psi}_{SAE}=1/d^2$ and, therefore, $\langle \ket{\psi}_{SAE},\{\Pi_{SA}^x\}_x,\{M_E^x\}_{x}\rangle$ achieves $p^\text{Q}_{{\rm guess}}(X|E,\rho_S,\{M_S^x\}_x)=1$.
\end{proof}

In the definition of the classical guessing probability $p^\text{C}_{{\rm guess}}(X|\Lambda,\rho_S,\{M_S^x\}_x)$ we assume that the states in the decomposition of $\rho_S$ are pure and, although not explicitly, that the POVMs in the decomposition of $\{M_S^x\}_x$ are extremal. In the following lemma we prove that this is without loss of generality.

\begin{lemma}[Classical maximum is achieved at extremals]\label{lemma:solution-made-of-extremals} Let

\begin{align}
&~\tilde{p}^\text{C}_{{\rm guess}}(X|\Lambda,\rho_S,\{M_S^x\}_x):= \max_{p(\lambda),\rho^\lambda_S,\{M_S^{x,\lambda}\}_{x}}\sum_{\lambda}p(\lambda) \max_x \Tr[M_S^{x,\lambda}\rho^\lambda_S] &&\nonumber\\
&\qquad~\textrm{subject to } &\nonumber\\
&\qquad\qquad \sum_{\lambda} p(\lambda)\rho^\lambda_S = \rho_S&\nonumber\\
&\qquad\qquad \sum_{\lambda} p(\lambda)M_S^{x,\lambda} = M_S^{x} \text{ for all } x&\nonumber\\
&\qquad\qquad \sum_{\lambda}p(\lambda)\Tr[M_S^{x,\lambda}\rho^\lambda_S]=\Tr[M_S^x\rho_S]\text{ for all } x
\end{align}

Then, for every $\rho_S$ and every $\{M_S^x\}_x$ there is an optimal solution $\langle p(\lambda),\rho_S^\lambda,\{M_S^{x,\lambda}\}_x\rangle$ to $\tilde{p}^\text{C}_{{\rm guess}}(X|\Lambda,\rho_S,\{M_S^x\}_x)$ with $\rho_S^\lambda$ pure and $\{M_S^{x,\lambda}\}_x$ extremal for all $\lambda$.
\end{lemma}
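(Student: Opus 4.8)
The plan is to start from an arbitrary optimal solution $\langle p(\lambda),\rho_S^\lambda,\{M_S^{x,\lambda}\}_x\rangle$ of $\tilde{p}^\text{C}_{{\rm guess}}(X|\Lambda,\rho_S,\{M_S^x\}_x)$ and to \emph{refine} it into one of the required form by splitting, for each value of $\lambda$, the state into pure components and the POVM into extremal POVMs, and then re-indexing. Concretely, I would write $\rho_S^\lambda=\sum_k q(k|\lambda)\ket{\varphi_k^\lambda}\bra{\varphi_k^\lambda}$ (any pure-state ensemble, e.g.\ the spectral one) and, using the convexity and compactness of the set of POVMs together with Carath\'eodory's theorem, $M_S^{x,\lambda}=\sum_m r(m|\lambda)M_S^{x,\lambda,m}$ with each $\{M_S^{x,\lambda,m}\}_x$ extremal \cite{d2005classical}. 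The refined solution is indexed by the triple $(\lambda,k,m)$, carries weight $p(\lambda)q(k|\lambda)r(m|\lambda)$, and assigns the pure state $\ket{\varphi_k^\lambda}$ together with the extremal POVM $\{M_S^{x,\lambda,m}\}_x$.

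The first thing I would verify is that the three constraints survive the refinement. Using $\sum_m r(m|\lambda)=1$, the state-reconstruction constraint collapses back to $\sum_\lambda p(\lambda)\rho_S^\lambda=\rho_S$; using $\sum_k q(k|\lambda)=1$, the POVM-reconstruction constraint collapses back to $\sum_\lambda p(\lambda)M_S^{x,\lambda}=M_S^x$; and for the statistics constraint the inner sum over $k$ and $m$ factorizes as $\Tr[(\sum_m r(m|\lambda)M_S^{x,\lambda,m})(\sum_k q(k|\lambda)\ket{\varphi_k^\lambda}\bra{\varphi_k^\lambda})]=\Tr[M_S^{x,\lambda}\rho_S^\lambda]$, so after weighting by $p(\lambda)$ and summing it reproduces $\Tr[M_S^x\rho_S]$ exactly as before.

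The crux is showing the objective does not decrease. For a fixed $\lambda$ the original contribution is $\max_x\Tr[M_S^{x,\lambda}\rho_S^\lambda]=\max_x\sum_{k,m}q(k|\lambda)r(m|\lambda)\Tr[M_S^{x,\lambda,m}\ket{\varphi_k^\lambda}\bra{\varphi_k^\lambda}]$, whereas the refined contribution is $\sum_{k,m}q(k|\lambda)r(m|\lambda)\max_x\Tr[M_S^{x,\lambda,m}\ket{\varphi_k^\lambda}\bra{\varphi_k^\lambda}]$. Since the maximum of a convex combination is bounded above by the convex combination of the maxima (a single maximizing $x$ serves all terms in the former, while each term may choose its own $x$ in the latter), the refined contribution is at least as large. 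Summing over $\lambda$ with weight $p(\lambda)$ shows the refined value is $\geq$ the optimal value, hence equal to it, so the refined solution—which by construction has pure states and extremal POVMs—is itself optimal.

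The main obstacle I anticipate is bookkeeping rather than conceptual: one must ensure the finite decomposition into extremal POVMs genuinely exists (guaranteed in finite dimension by Carath\'eodory applied to the compact convex set of POVMs) and that the re-indexing $z\to(\lambda,k,m)$ is carried through consistently in all three constraints and the objective. The decisive inequality is merely the elementary ``$\max$ of an average $\leq$ average of $\max$'' estimate, so no delicate analysis is required.
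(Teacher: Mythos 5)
Your proposal is correct and takes essentially the same approach as the paper's proof: decompose the states into pure components and the POVMs into extremal ones, check that the refined (product-weighted) solution still satisfies all three constraints, and invoke the convexity of $\max$ (average of maxima $\geq$ maximum of the average) to conclude the objective does not decrease, hence the refinement is optimal. The only cosmetic differences are that the paper performs the replacement iteratively, one non-extremal $\tilde{\lambda}$ at a time, and simply asserts the finite extremal decomposition, whereas you split everything at once and justify the decomposition explicitly via Carath\'eodory's theorem.
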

\begin{proof}
Let $\langle p(\lambda),\rho_S^\lambda,\{M_S^{x,\lambda}\}_x\rangle$ be an optimal solution to $\tilde{p}^\text{C}_{{\rm guess}}(X|\Lambda,\rho_S,\{M_S^x\}_x)$ and suppose there exists $\tilde{\lambda}$ such that $\{M_S^{x,\tilde{\lambda}}\}_x$ is not extremal. Then  $M_S^{x,\tilde{\lambda}}=\sum_{j}p(j)N_S^{x,\tilde{\lambda},j}$ with $\{N_S^{x,\tilde{\lambda},j}\}_x$ extremal POVMs for all $j$. Define
\begin{align*}
p(\lambda,j)&=\begin{cases}
p(\lambda) & \text{ if } \lambda\neq \tilde{\lambda}\\
p(\tilde{\lambda})p(j) & \text{otherwise}
\end{cases}\\
\rho_{S}^{\lambda,j}&=\rho_S^{\lambda}\\
M_S^{x,\lambda,j}&=\begin{cases}
M_S^{x,\lambda} & \text{ if } \lambda\neq \tilde{\lambda}\\
N_S^{x,\tilde{\lambda},j} & \text{otherwise}
\end{cases}\\.
\end{align*}
It is straightforward to see that $\langle p(\lambda,j),\rho_{S}^{\lambda,j},\{M_S^{x,\lambda,j}\}_x\rangle$ is a solution to $\tilde{p}^\text{C}_{{\rm guess}}(X|\Lambda,\rho_S,\{M_S^x\}_x)$. Finally, notice that
\begin{align*}
\sum_{\lambda,j}p(\lambda,j)\max_{x} \Tr[M_S^{x,\lambda,j}\rho_S^{\lambda,j}]&=\sum_{\lambda\neq \tilde{\lambda}} p(\lambda)\max_{x} \Tr[M_S^{x,\lambda}\rho_S^{\lambda}]+p(\tilde{\lambda})\sum_j p(j)\max_{x} \Tr[N_S^{x,\tilde{\lambda},j}\rho_S^{\tilde{\lambda}}]\\
&\geq \sum_{\lambda\neq \tilde{\lambda}} p(\lambda)\max_{x} \Tr[M_S^{x,\lambda}\rho_S^{\lambda}]+p(\tilde{\lambda})\max_x \sum_j p(j)\Tr[N_S^{x,\tilde{\lambda},j}\rho_S^{\tilde{\lambda}}]\\
&=\sum_{\lambda\neq \tilde{\lambda}} p(\lambda)\max_{x} \Tr[M_S^{x,\lambda}\rho_S^{\lambda}]+p(\tilde{\lambda})\max_{x} \Tr[M_S^{x,\tilde{\lambda}}\rho_S^{\tilde{\lambda}}]\\
&=\tilde{p}^\text{C}_{{\rm guess}}(X|\Lambda,\rho_S,\{M_S^x\}_x),
\end{align*}
with the inequality following from the convexity of the $\max$ function. Therefore $\langle p(\lambda,j),\rho_{S}^{\lambda,j},\{M_S^{x,\lambda,j}\}_x\rangle$ is also an optimal solution but with the nonextremal POVM $\{M_S^{x,\tilde{\lambda}}\}_x$ replaced by extremal POVMs. By repeating the same process with every nonextremal POVM we end up with a optimal solution whose decomposition of $\{M_S^{x}\}_S$ is just given in terms of extremal POVMs. The same reasoning holds for the nonpure states.
\end{proof}

\bigskip

The next lemma identifies a necessary condition for having $p^\text{C}_{{\rm guess}}(X|\Lambda,\rho_S,\{M_S^x\}_x)=1$ in two scenarios.

\begin{lemma}[Necessary condition for perfect classical pguess]\label{lemma:classical-pguess-max-with-projectors} In the case of $d={\rm dim}(\mathcal{H}_S)=2$ or $\{M_S^x\}_x$ having $d^2$ outcomes, if $p^\text{C}_{{\rm guess}}(X|\Lambda,\rho_S,\{M_S^x\}_x)=1$ then $\{M_S^x\}_x$ is a convex combination of PMs.
\end{lemma}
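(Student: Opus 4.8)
The plan is to reduce the hypothesis $p^\text{C}_{\rm guess}=1$ to a statement about the extremal POVMs appearing in an optimal classical decomposition, and then to show that each such POVM is itself a PM. By Lemma~\ref{lemma:solution-made-of-extremals} I may fix an optimal solution $\langle p(\lambda),\ket{\varphi_\lambda},\{M_S^{x,\lambda}\}_x\rangle$ of $p^\text{C}_{\rm guess}(X|\Lambda,\rho_S,\{M_S^x\}_x)$ in which every $\ket{\varphi_\lambda}$ is pure and every $\{M_S^{x,\lambda}\}_x$ is extremal. Since $\max_x\braket{\varphi_\lambda|M_S^{x,\lambda}|\varphi_\lambda}\le\sum_x\braket{\varphi_\lambda|M_S^{x,\lambda}|\varphi_\lambda}=1$ for each $\lambda$ and the weights $p(\lambda)$ sum to one, the equality $p^\text{C}_{\rm guess}=1$ forces $\max_x\braket{\varphi_\lambda|M_S^{x,\lambda}|\varphi_\lambda}=1$ whenever $p(\lambda)>0$. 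Picking $x^\ast(\lambda)$ attaining this maximum, the saturation $\braket{\varphi_\lambda|M_S^{x^\ast,\lambda}|\varphi_\lambda}=1$ together with $0\preccurlyeq M_S^{x,\lambda}\preccurlyeq\mathbb{I}$ gives $M_S^{x^\ast,\lambda}\ket{\varphi_\lambda}=\ket{\varphi_\lambda}$ and hence $M_S^{x,\lambda}\ket{\varphi_\lambda}=0$ for every $x\neq x^\ast$. So $\ket{\varphi_\lambda}$ is an eigenvector of eigenvalue $1$ of exactly one effect of the $\lambda$-th POVM, and all the others annihilate it.

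Next I would use extremality to control the ranks. Recalling the characterization of \cite{d2005classical}, a POVM $\{N^x\}_x$ is extremal iff the only Hermitian perturbations $\{D^x\}_x$ with $\mathrm{supp}(D^x)\subseteq\mathrm{supp}(N^x)$ and $\sum_x D^x=0$ are trivial; a dimension count on this condition yields the necessary bound $\sum_x\mathrm{rank}(N^x)^2\le d^2$. When the measurement has $d^2$ nonzero outcomes this bound, with $\mathrm{rank}(N^x)\ge1$, forces every effect to be rank one, and when $d=2$ it rules out a rank-two effect in any nontrivial extremal POVM, so again all effects are rank one. Applied to the optimal pieces, $M_S^{x^\ast,\lambda}$ is then rank one with eigenvalue $1$, i.e. $M_S^{x^\ast,\lambda}=\ket{\varphi_\lambda}\bra{\varphi_\lambda}$.

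It then remains to show each extremal piece is a PM. The remaining effects annihilate $\ket{\varphi_\lambda}$, so they are supported in $\ket{\varphi_\lambda}^\perp$ and sum to $\mathbb{I}-\ket{\varphi_\lambda}\bra{\varphi_\lambda}$; they therefore form a POVM on $\ket{\varphi_\lambda}^\perp$, and this restricted POVM is again extremal because any nontrivial convex splitting of it, after re-adjoining $\ket{\varphi_\lambda}\bra{\varphi_\lambda}$, would split $\{M_S^{x,\lambda}\}_x$. For $d=2$ the complement $\ket{\varphi_\lambda}^\perp$ is one-dimensional, so extremality leaves a single nonzero effect equal to $\ket{\varphi_\lambda^\perp}\bra{\varphi_\lambda^\perp}$; hence $\{M_S^{x,\lambda}\}_x$ is the projective measurement in the basis $\{\ket{\varphi_\lambda},\ket{\varphi_\lambda^\perp}\}$, and $\{M_S^x\}_x=\sum_\lambda p(\lambda)\{M_S^{x,\lambda}\}_x$ is a convex combination of PMs.

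The hard part will be upgrading this last step beyond $d=2$, that is the $d^2$-outcome case with $d\ge3$: there $\ket{\varphi_\lambda}^\perp$ has dimension $d-1>1$, and a rank-one extremal POVM on it carrying a single eigenvalue-$1$ effect need not be projective (a qutrit with $\ket{0}\bra{0}$ plus a trine in $\mathrm{span}\{\ket{1},\ket{2}\}$ is rank-one, extremal, and not a PM). Closing that case requires iterating the peel-off, i.e. showing the restricted POVM again exhibits an eigenvalue-$1$ effect at each stage, which is precisely where the $d^2$-outcome hypothesis and the statistics-reproducing constraint $\sum_\lambda p(\lambda)\braket{\varphi_\lambda|M_S^{x,\lambda}|\varphi_\lambda}=\Tr[M_S^x\rho_S]$ of \eqref{eq:pguess-extremals-general} must be exploited. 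For the qubit application of Theorem~\ref{thm:main} ($d=2$, four outcomes) this difficulty does not arise and the argument above already suffices.
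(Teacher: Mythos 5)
Your proof of the $d=2$ case is correct and is essentially the paper's own argument: the same appeal to Lemma~\ref{lemma:solution-made-of-extremals}, the same saturation step forcing $\max_x\braket{\varphi_\lambda|M_S^{x,\lambda}|\varphi_\lambda}=1$ and hence $M_S^{x^\ast,\lambda}=\ket{\varphi_\lambda}\bra{\varphi_\lambda}$, the same rank-one reduction via the extremality dimension count $\sum_x \mathrm{rank}(M_S^{x,\lambda})^2\le d^2$ from \cite{d2005classical}, and the same collapse of the remaining effects onto $\ket{\varphi_\lambda^\perp}\bra{\varphi_\lambda^\perp}$. The only cosmetic difference is the last step: the paper derives uniqueness of the orthogonal effect from the fact that two effects of an extremal POVM cannot have coinciding supports (\cite[Corollary 3]{d2005classical}), whereas you argue that the peeled-off POVM on $\ket{\varphi_\lambda}^{\perp}$ inherits extremality and then use one-dimensionality of that subspace; the two arguments are interchangeable.

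The case you decline to prove is where the real issue lies, and your caution is better justified than you may realize: the paper's proof does not handle it either. Its closing computation writes $\ket{\alpha_x}=a_x\ket{\varphi_\lambda}+b_x\ket{\varphi_\lambda^\perp}$ and uses $\mathbb{I}-\ket{\varphi_\lambda}\bra{\varphi_\lambda}=\ket{\varphi_\lambda^\perp}\bra{\varphi_\lambda^\perp}$, both of which presuppose $\dim \ket{\varphi_\lambda}^{\perp}=1$, i.e.\ $d=2$; and for $d\ge 3$ even the rank-one step is unjustified, since components of a decomposition of a $d^2$-outcome POVM may have null effects, so fewer than $d^2$ terms enter the counting bound. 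Moreover, your trine obstruction does more than block this proof strategy: it can be promoted to a counterexample to the $d\ge3$ clause itself, so the iterated peel-off you propose cannot succeed, not even using the statistics constraint. For $d=3$, mix your POVM $\{\ket{0}\bra{0},\tfrac23\ket{t_1}\bra{t_1},\tfrac23\ket{t_2}\bra{t_2},\tfrac23\ket{t_3}\bra{t_3}\}$ (padded with five null outcomes) with weight $\tfrac12$, together with the PMs $\{\ket{0}\bra{0},\ket{1}\bra{1},\ket{2}\bra{2}\}$ and $\{\ket{0}\bra{0},\ket{+}\bra{+},\ket{-}\bra{-}\}$, $\ket{\pm}=(\ket{1}\pm\ket{2})/\sqrt{2}$, with weight $\tfrac14$ each, placed on the outcome slots so that all nine effects of the mixture are nonzero, and pair every component with the same pure state $\ket{0}$. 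All constraints of Eq.~\eqref{eq:pguess-extremals-general} hold (the statistics condition is automatic when all components share one state), and each component has an effect with eigenvalue one on $\ket{0}$, so $p^\text{C}_{\rm guess}=1$; yet the mixture is not a convex combination of PMs: all its effects are rank one, so a simulating PM can only place at each slot the corresponding rank-one projector or zero; the sets of PMs carrying $\ket{t_1}\bra{t_1}$, $\ket{t_2}\bra{t_2}$, $\ket{t_3}\bra{t_3}$ are pairwise disjoint (the trine vectors are pairwise non-orthogonal) and carry weights $\tfrac13+\tfrac13+\tfrac13=1$, while $\braket{1|t_i}\neq 0$ for all $i$ forbids any PM in these sets from also carrying $\ket{1}\bra{1}$, contradicting the weight $\tfrac14$ that slot requires. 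So the $d\ge3$ clause fails as stated; what survives is exactly the $d=2$ case — the only one Theorem~\ref{thm:main} invokes, since there four outcomes equals $d^2$ — and your argument proves precisely that.
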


\begin{proof}
Let $\langle p(\lambda),\rho_S^\lambda,\{M_S^{x,\lambda}\}_x\rangle$ be a solution to $p^\text{C}_{{\rm guess}}(X|\Lambda,\rho_S,\{M_S^x\}_x)$ with $\sum_\lambda p(\lambda)\max_x\bra{\varphi_\lambda}M_S^{x,\lambda}\ket{\varphi_\lambda}_S=1$. Then, for every $\lambda$, 
\begin{align}\label{eq:max-classical-pguess}
\max_x\bra{\varphi_\lambda}M_S^{x,\lambda}\ket{\varphi_\lambda}_S=1.
\end{align}
We will prove that, for every $\lambda$, $\{M_S^{x,\lambda}\}_x$ is a PM. If $M_x^\lambda=\mathbb{I}$, there is nothing to prove. If not, then, either in the cases of $S$ being a qubit or in the case of $\{M_S^x\}$ having $d^2$ outcomes, the extremality of $\{M_S^{x,\lambda}\}_x$ (c.f. Lemma \ref{lemma:solution-made-of-extremals}) implies that it has to be rank one (see \cite[Corollary 2]{d2005classical}), that is $M_S^{x,\lambda}=\alpha_x \ket{\alpha_x}\bra{\alpha_x}$ with $\alpha_x\in[0,1]$ for all $x$. Therefore, from Eq. \eqref{eq:max-classical-pguess}, it follows that there exists $x'$ such that $M_S^{x',\lambda}=\ket{\varphi_\lambda}\bra{\varphi_\lambda}$. We will show that there exists $x''\neq x'$ such that $M_S^{x'',\lambda}=\mathbb{I}-\ket{\varphi_\lambda}\bra{\varphi_\lambda}$ and, therefore, $M_S^{x,\lambda}=\mathbf{0}$ for all $x\neq x',x''$.

Let $\ket{\alpha_x} = a_x \ket{\varphi_\lambda}+b_x \ket{\varphi_\lambda^\bot}$. Then,
\begin{align*}
0=|\braket{\varphi_\lambda|\varphi_\lambda^\bot}|^2&=\bra{\varphi_\lambda}\mathbb{I}-\ket{\varphi_\lambda}\bra{\varphi_\lambda}\ket{\varphi_\lambda}\\
&=\bra{\varphi_\lambda}\sum_{x\neq x'}M_S^{x,\lambda}\ket{\varphi_\lambda}\\
&=\sum_{x\neq x'} \alpha_x |{a_x}|^2.
\end{align*}
Since this is a sum of positive numbers, it implies that if $\alpha_x\neq 0$ then $a_x=0$ and, so, $\ket{\alpha_x}=\ket{\varphi_\lambda^\bot}$. Since $\sum_{x\neq x'}\alpha_x\ket{\alpha_x}\bra{\alpha_x}=\ket{\varphi_\lambda^\bot}\bra{\varphi_\lambda^\bot}$, there must exist some $x_{\bot}\neq x'$ such that $\alpha_{x_\bot}\neq 0$. Moreover, since by Lemma \ref{lemma:solution-made-of-extremals} we can assume that $\{M_S^{x,\lambda}\}_x$ is extremal, this $x_\bot$ has to be unique and equal to $1$, because otherwise there will be two different effects proportional to $\ket{\varphi_\lambda^\bot}\bra{\varphi_\lambda^\bot}$ and, hence, with coinciding support, something that cannot happen in an extremal POVM (see \cite[Corollary 3]{d2005classical}).
\end{proof}

\bigskip

For the last part of the proof, we will use the following semidefinite programming characterization of the set of four outcome qubit POVMs which are convex combinations of PMs given in \cite{oszmaniec2017simulating}.

\begin{lemma}{\cite[Supplemental material, Eq. (56)]{oszmaniec2017simulating}}\label{lemma:pm-simulability}
A four outcome qubit POVM $\{M_x\}_{x=1}^4$ is a convex combination of PMs if and only if 
	\begin{align*}\
	M_{1}&=N_{12}^{+}+N_{13}^{+}+N_{14}^{+}\\
	M_{2}&=N_{12}^{-}+N_{23}^{+}+N_{24}^{+}\\
	M_{3}&=N_{13}^{-}+N_{23}^{-}+N_{34}^{+}\\
	M_{4}&=N_{14}^{-}+N_{24}^{-}+N_{34}^{-}
	\end{align*}
	where Hermitian operators $N_{ij}^{\pm}$ satisfy $N_{ij}^{\pm}\geq0$
	and $N_{ij}^{+}+N_{ij}^{-}=p_{ij}\mathbb{I}$ for $i<j$ , $i,j=1,2,3,4$, where $p_{ij}\geq0$ and $\sum_{i<j}p_{ij}=1$.
\end{lemma}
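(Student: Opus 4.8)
The plan is to prove this structural characterization of projectively-simulable qubit POVMs directly, by first pinning down the shape of the extremal objects being mixed. Since $\dim\mathcal{H}=2$, any four-outcome qubit PM consists of mutually orthogonal projectors summing to $\mathbb{I}$, so at most two of its effects are nonzero: generically one outcome $i$ carries a rank-one projector $\Pi$, a partner outcome $j$ carries $\mathbb{I}-\Pi$, and the remaining two outcomes carry $0$, with the degenerate case of a single outcome carrying $\mathbb{I}$. Thus every extremal building block is labelled by an unordered pair $\{i,j\}$, and this is precisely the combinatorial data that the variables $N_{ij}^{\pm}$ and weights $p_{ij}$ are meant to record, under the convention that in a pair $i<j$ the lower index takes the ``$+$'' effect and the higher index the ``$-$'' effect.

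For the forward ($\Rightarrow$) direction I would start from a convex decomposition $M_x=\sum_k q_k P_x^k$ into four-outcome qubit PMs and regroup it by active pair. Letting $p_{ij}$ be the total weight of the PMs supported on $\{i,j\}$ and setting $N_{ij}^{+}:=\sum_k q_k P_i^k$ and $N_{ij}^{-}:=\sum_k q_k P_j^k$ over those PMs, positivity is immediate, $N_{ij}^{+}+N_{ij}^{-}=p_{ij}\mathbb{I}$ because $P_i^k+P_j^k=\mathbb{I}$ on that pair, and $\sum_{i<j}p_{ij}=\sum_k q_k=1$. Collecting, for each outcome $x$, the contributions coming from the three pairs containing $x$ then reproduces exactly the four displayed identities for $M_1,\dots,M_4$; the degenerate single-outcome PMs are absorbed by assigning them to any admissible pair, with one of $N_{ij}^{\pm}$ a multiple of $\mathbb{I}$ and the other $0$.

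For the converse ($\Leftarrow$) direction, given feasible $N_{ij}^{\pm}$ and $p_{ij}$, I would, for each pair with $p_{ij}>0$, read off the binary qubit POVM $\{N_{ij}^{+}/p_{ij},\,N_{ij}^{-}/p_{ij}\}$, which is positive and sums to $\mathbb{I}$. The key qubit fact I would invoke is that every binary qubit POVM $\{F,\mathbb{I}-F\}$ is a convex combination of (possibly trivial) binary PMs: diagonalising $F=\lambda_+\Pi+\lambda_-(\mathbb{I}-\Pi)$ with $1\geq\lambda_+\geq\lambda_-\geq 0$, one writes it as the mixture of the PM $\{\Pi,\mathbb{I}-\Pi\}$ with weight $\lambda_+-\lambda_-$ and the two deterministic measurements assigning $\mathbb{I}$ to the first, respectively second, outcome with weights $\lambda_-$ and $1-\lambda_+$. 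Embedding each such binary PM back into a four-outcome measurement (effects placed on outcomes $i,j$, zeros elsewhere) turns it into a four-outcome qubit PM, and mixing these over all pairs with weights $p_{ij}$ exhibits $\{M_x\}_{x=1}^4$ as a convex combination of four-outcome PMs.

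The main obstacle, and the genuinely two-dimensional ingredient, is the simulability of binary qubit POVMs by projective measurements used in the converse: this is what upgrades the pairwise $N_{ij}^{\pm}$ ansatz from merely necessary to sufficient, and it fails in higher dimension, so the argument must use $d=2$ essentially here. By contrast, the forward direction is routine bookkeeping, the only care needed being a well-defined assignment of each extremal PM, including the degenerate single-outcome ones, to a unique pair $\{i,j\}$.
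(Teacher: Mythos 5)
Your proof is correct, but note that the paper itself offers no proof of this lemma: it is imported verbatim from the supplemental material of Oszmaniec et al.~\cite{oszmaniec2017simulating}, so there is no in-paper argument to compare against. Your reconstruction is essentially the standard one from that reference: the forward direction exploits that a four-outcome qubit PM has at most two nonzero effects (a rank-one projector $\Pi$ on some outcome $i$, its complement $\mathbb{I}-\Pi$ on some $j$, zeros elsewhere, with the trivial measurement as a degenerate case), so the mixture can be regrouped pairwise into the $N_{ij}^{\pm}$ variables; the converse normalizes each pair into a binary POVM $\{N_{ij}^{+}/p_{ij},\,N_{ij}^{-}/p_{ij}\}$, decomposes it into binary PMs via the spectral mixture $F=(\lambda_+-\lambda_-)\Pi+\lambda_-\mathbb{I}$, and re-embeds into four outcomes. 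Both directions check out, including the bookkeeping with degenerate PMs and with pairs having $p_{ij}=0$ (where positivity forces $N_{ij}^{\pm}=0$).

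One side remark in your last paragraph is wrong, though it does not affect the validity of the proof: you claim the binary-POVM simulability step is the ``genuinely two-dimensional ingredient'' and fails in higher dimension. In fact, every binary POVM $\{F,\mathbb{I}-F\}$ in \emph{any} dimension $d$ is a convex combination of binary PMs, by exactly the staircase construction you use: writing the ordered eigenvalues $1\geq\lambda_1\geq\dots\geq\lambda_d\geq 0$ of $F$ and the projectors $P_k$ onto the top-$k$ eigenspaces, one has $F=\lambda_d\mathbb{I}+\sum_{k=1}^{d-1}(\lambda_k-\lambda_{k+1})P_k+(1-\lambda_1)\cdot 0$, with nonnegative weights summing to one. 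So the converse direction of the lemma would survive in higher dimension. What is genuinely two-dimensional is the \emph{forward} direction, which you in fact justify correctly at the start: only for $d=2$ is every projective measurement supported on at most a pair of outcomes, whereas for $d\geq 3$ a PM can have three or more nonzero effects and the pairwise $N_{ij}^{\pm}$ form ceases to be necessary. You have the right proof but attribute the dimension restriction to the wrong half of it.
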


%

\bigskip

We now have all the ingredients ready for the proof of our main theorem:

\begin{proof}[Proof of Theorem \ref{thm:main}]
First, let us introduce the family, parametrized by $\theta\in[0,\pi/2]$, of entangled basis $\{\ket{\Phi_x^\theta}\}_{x=1}^4$ for a space of two qubits defined in \cite[Eq. (3)]{tavakoli2021bilocal}

\begin{equation}
\ket{\Phi_b^\theta}:=\frac{\sqrt{3}+e^{i\theta}}{2\sqrt{2}}\ket{\vec{m}_b,{-}\vec{m}_b}+\frac{\sqrt{3}-e^{i\theta}}{2\sqrt{2}}\ket{{-}\vec{m}_b,\vec{m}_b}.
\end{equation}

with

$$\ket{{\pm}\vec{m}_b}=\sqrt{\frac{1\pm\eta_b}{2}}e^{-i\varphi_b/2}\ket{0}\pm\sqrt{\frac{1\mp\eta_b}{2}}e^{i\varphi_b/2}\ket{1}$$

and 

$$\vec{m}_b=\sqrt{3}\left(\sqrt{1-\eta_b^2}\cos\varphi_b, \sqrt{1-\eta_b^2}\sin\varphi_b,\eta_b\right)$$

for

\begin{align}\nonumber
& \vec{m}_1=\left(+1,+1,+1\right), \quad \vec{m}_2=\left(+1,-1,-1\right),\\
& \vec{m}_3=\left(-1,+1,-1\right), \quad \vec{m}_4=\left(-1,-1,+1\right).
\end{align}

From Lemma \ref{lemma:state-and-measurement-trivial-for-quantum} we have that 
\begin{align*}
p^\text{Q}_{{\rm guess}}(X|E,\mathbb{I}/2,\{M_S^{x,\theta}\}_x)=1
\end{align*}
for $\{M_S^{x,\theta}\}_x:=\{\Tr_A[\ket{\phi^\theta_x}\bra{\phi^\theta_x}_{SA}]/2\}_x$ and for all $\theta\in[0,\pi/2]$.

\medskip

We implemented the SDP in Lemma \ref{lemma:pm-simulability} and obtained that the POVMs $\{M_S^{x,\theta}\}_x$ are not a convex combination of PMs for values of $\theta\in[0,\pi/10]$ \cite{pythoncode}. Although we conjecture that this holds for every $\theta\in[0,\pi/10]$, given that numerically one can only check finitely many, for this proof we set $\theta=0$ (one of those values that we have checked).  Therefore, by Lemma \ref{lemma:classical-pguess-max-with-projectors}, we have that $$p^\text{C}_{{\rm guess}}(X|\Lambda,\mathbb{I}/2,\{M_S^{x}\}_x)<1$$ for $\{M_S^x\}=\{M_S^{x,0}\}_x$, thus concluding the proof.
\end{proof}

\end{widetext}

\end{document}